\documentclass[american,aps,prl,reprint,nofootinbib,longbibliography]{revtex4-2}
\usepackage{amsmath, amssymb, amsfonts, amsthm, mathtools}
\usepackage{graphicx}
\usepackage{braket}
\usepackage[pdfusetitle,
 bookmarks=true,bookmarksnumbered=false,bookmarksopen=false,
 breaklinks=false,pdfborder={0 0 0},pdfborderstyle={},backref=false,colorlinks=true]
 {hyperref}
\hypersetup{
 allcolors=magenta}
\usepackage{xcolor}
\usepackage{times,txfonts}

\newtheorem{theorem}{Theorem}
\newtheorem{lemma}{Lemma}
\newtheorem{definition}{Definition}

\newtheorem{remark}{Remark}
\newtheorem{corollary}{Corollary}
\newtheorem{proposition}{Proposition}

\begin{document}

\title{Advantage of flexible catalysis for entanglement and quantum thermodynamics}

\author{Jingsong Ao}
\author{Aby Philip}
\author{Alexander Streltsov}
\affiliation{Institute of Fundamental Technological Research, Polish Academy of Sciences, Pawi\'{n}skiego 5B, 02-106 Warsaw, Poland}

\begin{abstract}
Understanding the fundamental limits of state convertibility is crucial for establishing the boundaries of quantum information processing and thermodynamic efficiency. While auxiliary systems, catalysts, can facilitate otherwise impossible transformations, standard catalysis rigidly requires the auxiliary system to return to its exact initial state. In this work, we investigate the power of flexible catalysis, where the catalyst evolves through a cycle of states, restoring its initial configuration only after a finite number of steps. Focusing on the regime of fixed, finite dimensions, we analyze the capabilities of flexible catalysis within the resource theories of entanglement and quantum thermodynamics. In the context of entanglement, we derive conditions limiting flexible catalysts, yet show that flexible catalysis can be strictly more powerful than same-dimensional standard catalysis: it enables deterministic transformations achievable by no standard catalyst of the same dimension, and it strictly increases the success probability of stochastic local operations and classical communication. A similar deterministic advantage arises in quantum thermodynamics, where flexible catalysis enables state transformations that are impossible with any standard catalyst of fixed dimension and Hamiltonian but become achievable via flexible catalysis.
\end{abstract}

\maketitle

\textit{Introduction}---
The study of state transformations lies at the heart of quantum information theory and statistical mechanics. Determining which transformations are physically allowed under restricted operations is crucial for establishing the fundamental limits of quantum protocols. Prominent examples include transitions between pure bipartite entangled states via local operations and classical communication (LOCC) ~\cite{Nielsen_1999,RevModPhys.81.865,Chitambar_2014}, and the thermodynamic conversion of energy-incoherent states via thermal operations (TO)~\cite{Janzing2000,Horodecki_2013,PhysRevX.5.021001,Brandao_2015,PRXQuantum.3.040323} and Gibbs-preserving operations (GPO)~\cite{Faist_2015}. The fundamental limits of these transitions are mathematically characterized by majorization~\cite{Nielsen_1999} and thermomajorization~\cite{Horodecki_2013}, respectively. However, both criteria strictly limit the set of achievable transformations.

To circumvent some of these strict limitations, the concept of \textit{entanglement catalysis} was introduced~\cite{Jonathan_1999,Datta_2023,RevModPhys.96.025005}. Analogous to chemical catalysis, an auxiliary system, the catalyst, facilitates a transformation while remaining strictly invariant in its final state. Catalysis significantly expands the set of achievable transitions, revealing a richer landscape of state convertibility in entanglement and other quantum resource theories~\cite{Jonathan_1999,Grabowecky_2019,PhysRevA.79.054302,Brandao_2015,Datta_2023,RevModPhys.96.025005}. However, the standard definition of catalysis imposes a rigid constraint: the catalyst must return to its \textit{exact} initial state and remain uncorrelated with the system after a single step. In realistic experimental settings, such idealized control is highly demanding. Motivated by these limitations, more general frameworks have been developed in which the catalyst is allowed to build up correlations with the system during the transformation. These generalized notions of catalysis have led to significant advances in the resource theories of quantum thermodynamics~\cite{PhysRevX.8.041051,PhysRevLett.122.210402,PhysRevLett.126.150502,PhysRevLett.128.240501,PhysRevLett.132.200201} and entanglement~\cite{PhysRevLett.127.150503,PhysRevLett.127.080502,PhysRevLett.129.120506,PhysRevLett.133.250201}. Moreover, allowing for slight changes in the catalyst has been shown to lead to phenomena like ``embezzling'' entanglement, where transformations become trivial in the limit of infinite catalyst dimension~\cite{van_Dam_2003, Ng_2015}.

This raises a fundamental question for finite-dimensional resources: What if we relax the rigidity of the catalyst without assuming infinite resources? This leads to the paradigm of \textit{flexible catalysis}~\cite{weisz2025flexiblecatalysis}, where the catalyst is allowed to evolve through a cycle of states, returning to its initial configuration only after $n$ steps. Although, flexible catalysis reverts to standard catalysis when dimension restrictions are lifted~\cite{Duan_2005,RevModPhys.96.025005}, its behavior in the regime of \textit{fixed, finite dimensions}, relevant for near-term quantum devices, remains largely unexplored. Does the ability to ``borrow" resources within a cycle offer any advantage over a standard catalyst of the same size?

In this Letter, we investigate the power of flexible catalysis in both entanglement theory and quantum thermodynamics. Focusing first on entanglement theory, we establish a series of results characterizing the limits of flexible catalysts in fixed dimensions. By analyzing the constraints on the catalyst's components, we identify specific regimes where flexible catalysis yields no advantage, effectively forcing the flexible catalysts to degenerate into a standard catalyst.

Beyond these special cases, however, flexible catalysis is in general strictly more powerful than standard catalysis of the same dimension, and we demonstrate this in three settings. First, already for deterministic transformations of entangled states (majorization), we present an explicit example whose transformation is realized by a flexible cycle but by no standard catalyst of the same dimension, the latter certified using the algorithm of Ref.~\cite{Sun_2005}. Second, in probabilistic transformations between pure entangled states~\cite{Vidal_1999}, we show that a pair of two-qubit flexible catalysts can increase the success probability of a transformation strictly beyond what is possible with any standard catalyst of the same dimension. Third, we investigate flexible catalysis in thermo-majorization, which captures state transformations in quantum thermodynamics~\cite{Horodecki_2013,PRXQuantum.3.040323}: we provide an explicit example showing that flexible catalysis enables a deterministic transformation between two energy-incoherent states that is strictly impossible with any standard catalyst of fixed Hamiltonian. By unlocking new state transformations, flexible catalysis may provide a direct path to improving work extraction~\cite{Brandao_2015}.

\textit{Flexible Catalysis in Entanglement Theory}---
We begin by investigating transformation between two bipartite pure entangled states. Recall that every bipartite pure state $\ket{\psi}$ can be written as $\sum_i \sqrt{x_i} \ket{i}_A \ket{i}_B$, where $x_i \ge 0$ are Schmidt coefficients and $\{\ket{i}_{A/B}\}$ are local orthonormal bases~\cite{Horn_Johnson_1985}. The probability vector $\vec{x}$ formed by arranging the Schmidt coefficients in non-increasing order, is called the Schmidt vector. Nielsen proved~\cite{Nielsen_1999} that a deterministic transformation from $\ket{\psi}$ to $\ket{\phi}$ via LOCC is possible if and only if their Schmidt vectors satisfy $\vec{x} \prec \vec{y}$ (majorization). Formally, $\vec{x}$ is majorized by $\vec{y}$ if the sum of their $k$ largest components satisfies $\sum_{i=1}^k x_i \le \sum_{i=1}^k y_i$ for all $1 \le k < d$, with equality holding for $k=d$. Building on this, Jonathan and Plenio introduced entanglement catalysis~\cite{Jonathan_1999}: even if $\vec{x} \not\prec \vec{y}$, the transition may still occur via an auxiliary system $\vec{c}$ such that $\vec{x} \otimes \vec{c} \prec \vec{y} \otimes \vec{c}$. This system $\vec{c}$, which facilitates the process while remaining invariant, is called a catalyst.

Relaxing the above condition, the concept of \textit{flexible catalysis}~\cite{weisz2025flexiblecatalysis} allows the auxiliary system to return to its initial state only after completing an $n$-step cycle. Formally, this concept is defined for finite-dimensional probability vectors as follows:
\begin{definition}[Flexible Catalysis]\label{def:flexible_catalysis}
Let $\vec{x}$ and $\vec{y}$ be $d$-dimensional Schmidt vectors. We say $\vec{x}$ is majorized by $\vec{y}$ via flexible catalysis if there exists a sequence of $k$-dimensional Schmidt vectors $\{\vec{c}_i\}_{i=1}^n$ satisfying
\begin{equation}
    \vec{x} \otimes \vec{c}_i \prec \vec{y} \otimes \vec{c}_{i+1}, \label{eq:flexible_catalysis}
\end{equation}
subject to the condition $\vec{c}_{n+1} = \vec{c}_1$.
\end{definition}

Without dimension constraints, the direct sum $\frac{1}{n} \bigoplus_{i=1}^n \vec{c}_i$ serves as a valid standard catalyst~\cite{Duan_2005}; we prove this and discuss its dimensional cost in the Supplementary Materials. For fixed finite dimensions, the answer is more subtle. We first identify conditions under which flexible catalysis provides no advantage. We formally state this finding in the following theorem.
\begin{theorem}
\label{thm:low_dim_degeneracy}
Suppose a sequence of $k$-dimensional flexible catalysts $\{\vec{c}_i\}_{i=1}^n$ enables the transformation $\vec{x} \to \vec{y}$. At least one state $\vec{c}_j$ in the sequence is already a valid standard catalyst (i.e., $\vec{x} \otimes \vec{c}_j \prec \vec{y} \otimes \vec{c}_j$) if either of the following conditions holds: 
\begin{enumerate}
    \item[(i)] $k=2$.
    \item[(ii)] $k=3$, $x_1 = y_1$, and $x_d = y_d > 0$.
\end{enumerate}
\end{theorem}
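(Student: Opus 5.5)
The plan is to exploit the monotonicity of majorization under tensor products: if $\vec{c} \prec \vec{c}'$ then $\vec{x}\otimes\vec{c} \prec \vec{x}\otimes\vec{c}'$. This is standard, because $\vec{c} = D\vec{c}'$ for a doubly stochastic $D$ gives $\vec{x}\otimes\vec{c} = (\mathbb{1}\otimes D)(\vec{x}\otimes\vec{c}')$, and re-sorting does not affect majorization. From this I would isolate a general sufficient criterion. Suppose some index $j$ satisfies $\vec{c}_j \prec \vec{c}_{j-1}$, with indices taken cyclically. Combining this with the flexible condition \eqref{eq:flexible_catalysis} for $i=j-1$ gives
\begin{equation}
\vec{x}\otimes\vec{c}_j \prec \vec{x}\otimes\vec{c}_{j-1} \prec \vec{y}\otimes\vec{c}_j ,
\end{equation}
so $\vec{c}_j$ is a standard catalyst. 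Both cases then reduce to producing such a $j$.

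For case (i), $k=2$, every Schmidt vector has the form $(p,1-p)$ with $p\ge 1/2$. Two such vectors are always comparable: $(p,1-p)\prec(p',1-p')$ if and only if $p\le p'$. Hence the finite family $\{\vec{c}_i\}$ is totally ordered by majorization. I would pick $j$ so that $\vec{c}_j$ has the smallest first component, i.e. the most entangled element. Then $\vec{c}_j \prec \vec{c}_{j-1}$ holds automatically, and the criterion above finishes the case.

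For case (ii), $k=3$, total order fails, so instead I would show that all the $\vec{c}_i$ coincide. There are two steps.

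First, compare the largest entries on both sides of \eqref{eq:flexible_catalysis}. The largest entry of $\vec{x}\otimes\vec{c}_i$ is $x_1 c_{i,1}$, and that of $\vec{y}\otimes\vec{c}_{i+1}$ is $y_1 c_{i+1,1}$. The $k=1$ majorization inequality gives $x_1 c_{i,1}\le y_1 c_{i+1,1}$. With $x_1=y_1>0$ this yields $c_{i,1}\le c_{i+1,1}$, and going around the cycle forces all $c_{i,1}$ to be equal.

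Second, use the complementary form of majorization: the sum of the $m$ smallest entries of the majorized vector is at least that of the majorizing one. Taking $m=1$ gives $x_d c_{i,3}\ge y_d c_{i+1,3}$. Here I must make sure that zero entries of $\vec{x}$ or $\vec{y}$ do not spoil which entry is smallest; the hypothesis $x_d=y_d>0$ guarantees the smallest entries really are $x_d c_{i,3}$ and $y_d c_{i+1,3}$. Dividing by $x_d=y_d>0$ gives $c_{i,3}\ge c_{i+1,3}$, and cyclicity makes all third components equal.

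Normalization then fixes the middle component, so $\vec{c}_1=\dots=\vec{c}_n$, and any $\vec{c}_j$ is a standard catalyst.

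The main obstacle is mild. It is being careful with sorting in tensor products: the extreme entries of $\vec{x}\otimes\vec{c}$ are products of extreme entries only because all vectors are sorted and nonnegative. It is also checking that the positivity assumption $x_d>0$ is exactly what licenses the division in the last step, since the argument breaks down without it.
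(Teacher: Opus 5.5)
Your proof is correct and follows the paper's argument essentially step for step. In (i), the minimal element $\vec{c}_j$ of the totally ordered cycle satisfies $\vec{c}_j\prec\vec{c}_{j-1}$, and tensor monotonicity plus transitivity finish the case. In (ii), comparing largest and smallest entries around the cycle pins $c_{i,1}$ and $c_{i,3}$, and normalization fixes $c_{i,2}$.

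One small correction: the smallest entry of $\vec{x}\otimes\vec{c}_i$ is $x_d c_{i,3}$ for any sorted nonnegative vectors. So $x_d=y_d>0$ is needed only to divide, not to identify the extreme entries.
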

\begin{proof}
(i) For $k=2$, probability vectors are totally ordered under majorization. Thus, any finite cycle $\{\vec{c}_i\}_{i=1}^n$ contains a minimal element, say $\vec{c}_j$, such that $\vec{c}_j \prec \vec{c}_{j-1}$ (indices modulo $n$). Since majorization is preserved under tensor products, $\vec{x} \otimes \vec{c}_j \prec \vec{x} \otimes \vec{c}_{j-1}$. By definition, $\vec{x} \otimes \vec{c}_{j-1} \prec \vec{y} \otimes \vec{c}_j$. Transitivity yields $\vec{x} \otimes \vec{c}_j \prec \vec{y} \otimes \vec{c}_j$, making $\vec{c}_j$ a valid standard catalyst.

(ii) The condition $\vec{x} \otimes \vec{c}_i \prec \vec{y} \otimes \vec{c}_{i+1}$ implies $x_1 c_{i,1} \le y_1 c_{i+1,1}$ and $x_d c_{i,k} \ge y_d c_{i+1,k}$. Since $x_1 = y_1$ and $x_d = y_d > 0$, this reduces to $c_{i,1} \le c_{i+1,1}$ and $c_{i,k} \ge c_{i+1,k}$. Applied cyclically over $n$ steps, these enforce strict equality: $c_{i,1} = c_{i+1,1}$ and $c_{i,k} = c_{i+1,k}$ for all $i$. For $k=3$, normalization ($\sum_{m=1}^3 c_{i,m} = 1$) fixes the remaining middle component, yielding $c_{i,2} = c_{i+1,2}$. Thus, $\vec{c}_i = \vec{c}_{i+1}$ for all $i$, and the cycle degenerates into a standard catalyst.
\end{proof}

Following Theorem~\ref{thm:low_dim_degeneracy}, one might expect that flexible catalysis always yields no advantage for transformations governed by majorization. We now show that this is not case.

\begin{theorem}\label{thm:det_advantage}
There exist d-dimensional incomparable Schmidt vectors $\vec{x}, \vec{y}$ and a cyclic pair of $k$-dimensional flexible catalysts $\{\vec{c}_1, \vec{c}_2\}$ satisfying $\vec{x} \otimes \vec{c}_1 \prec \vec{y} \otimes \vec{c}_2$ and $\vec{x} \otimes \vec{c}_2 \prec \vec{y} \otimes \vec{c}_1$, while no $k$-dimensional standard catalyst $\vec{c}$ satisfies $\vec{x} \otimes \vec{c} \prec \vec{y} \otimes \vec{c}$.
\end{theorem}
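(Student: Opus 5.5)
This is an existence statement, so I will prove it by giving an explicit example and then certifying two things. The first is the positive part: the two majorization relations between $\vec{x}\otimes\vec{c}_1,\vec{y}\otimes\vec{c}_2$ and between $\vec{x}\otimes\vec{c}_2,\vec{y}\otimes\vec{c}_1$. This is a finite check of partial sums. The second is the negative part: no $k$-dimensional $\vec{c}$ satisfies $\vec{x}\otimes\vec{c}\prec\vec{y}\otimes\vec{c}$.

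Theorem~\ref{thm:low_dim_degeneracy} restricts where to look. We need $k\ge3$. If $k=3$, we must avoid having both $x_1=y_1$ and $x_d=y_d>0$. The natural choice is $k=3$ with $x_1<y_1$ (forced anyway for catalysis) and $x_d>y_d$ generic. To keep the cycle from collapsing, I take $\vec{c}_2$ as a "swapped" or perturbed version of $\vec{c}_1$, with neither majorizing the other. Then the first-order conditions $x_1c_{1,1}\le y_1c_{2,1}$ and $x_1c_{2,1}\le y_1c_{1,1}$ both hold with slack. The same goes for the smallest entries.

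The search will be numerical. I start from a known Jonathan–Plenio type pair in $d=4$, for example $\vec{x}=(0.4,0.4,0.1,0.1)$ and $\vec{y}=(0.5,0.25,0.25,0)$. I then perturb $\vec{x}$ and $\vec{y}$ so that the 2-dimensional and 3-dimensional standard catalysts fail. For each candidate I parametrize $\vec{c}_1,\vec{c}_2$ on the simplex and solve the feasibility problem for the two majorization relations. For a fixed ordering of the tensor-product entries, this problem is a set of linear inequalities in each $\vec{c}_i$ separately, though bilinear jointly. I keep only rational examples, so the check of the $dk$ partial sums can be displayed exactly.

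The main obstacle is the negative certificate: proving that no $3$-dimensional $\vec{c}$ works. My plan follows the cited algorithm of Sun et al.~\cite{Sun_2005}. The simplex of sorted $\vec{c}$ splits into finitely many regions according to the ordering of the products $x_ic_j$, and likewise of $y_ic_j$. On each region the condition $\vec{x}\otimes\vec{c}\prec\vec{y}\otimes\vec{c}$ becomes a finite system of linear inequalities in $(c_1,c_2,c_3)$. So infeasibility can be certified region by region, either by linear programming with a Farkas-type dual certificate or by exhibiting a violated partial sum.

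To cut down the case analysis, I use necessary conditions first. Examples are $x_1\le y_1$ and $x_d\ge y_d$, and the partial sums of length $1$, $2$ and $k$ of the largest products. These kill most regions directly. Only the remaining regions need explicit LP certificates, which I tabulate or defer to the Supplementary Materials. Finally I confirm that $\vec{x}\not\prec\vec{y}$ and $\vec{y}\not\prec\vec{x}$ hold, by comparing the first and third partial sums.
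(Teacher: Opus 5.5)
Your plan follows the paper's route. The paper gives an explicit $d=4$, $k=3$ instance and checks the two relations directly by Nielsen's partial sums. It certifies that no $3$-dimensional standard catalyst exists with the Sun--Duan--Ying cell decomposition, solving one linear feasibility problem per cell and, as you do, restricting to sorted $\vec{c}$.

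The real gap is that you never produce the instance. For an existence theorem the witness is the proof. A numerical search seeded by a perturbed Jonathan--Plenio pair comes with no guarantee of success, so as written your proposal is a search procedure, not yet a proof. The paper closes this by exhibiting $\vec{x}=\tfrac{1}{1500}(645,500,208,147)$, $\vec{y}=\tfrac{1}{1500}(730,376,272,122)$, $\vec{c}_1=\tfrac{1}{200}(89,71,40)$ and $\vec{c}_2=\tfrac{1}{200}(99,54,47)$. Your insistence on exact rational verification is well placed. In units of $1/300000$, the tightest of the $22$ non-trivial inequalities is the fourth partial sum of $\vec{x}\otimes\vec{c}_1\prec\vec{y}\otimes\vec{c}_2$, namely $183200\le 183224$.

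A few smaller points follow.

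\emph{Incomparability of $\vec{c}_1,\vec{c}_2$.} This is not just a heuristic but necessary. If $\vec{c}_1\prec\vec{c}_2$, then $\vec{x}\otimes\vec{c}_1\prec\vec{x}\otimes\vec{c}_2\prec\vec{y}\otimes\vec{c}_1$, so $\vec{c}_1$ is already a standard catalyst.

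\emph{The search problem is linear, not bilinear.} Fix the sort orders. Each constraint then reads $\sum_{(a,b)\in S}x_a c_{1,b}\le\sum_{(a,b)\in T}y_a c_{2,b}$, and likewise with $\vec{c}_1$ and $\vec{c}_2$ swapped. So on each product of cells the two-step problem is jointly linear in $(\vec{c}_1,\vec{c}_2)$, i.e.\ a single LP.

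\emph{Pruning cells.} The conditions $x_1\le y_1$ and $x_d\ge y_d$ do not involve $\vec{c}$, so they cannot prune any cell. Only $\vec{c}$-dependent conditions can, e.g.\ low-order partial sums or the ratio bound $c_1/c_k>y_l/y_{l+1}$ for $l\in\mathcal{L}$.

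\emph{Witnessing incomparability.} When $x_1<y_1$ and $x_4>y_4$, the first and third partial sums of $\vec{x}$ both lie below those of $\vec{y}$. So $\vec{x}\not\prec\vec{y}$ must be witnessed at the second partial sum, not the third; in the paper's instance this is $1145>1106$.
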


We provide an explicit instance with system dimension $d=4$ and catalyst dimension $k=3$:
\begin{align}
    \vec{x} &= \tfrac{1}{1500}(645, 500, 208, 147), \\
    \vec{y} &= \tfrac{1}{1500}(730, 376, 272, 122),
\end{align}
which are incomparable and satisfy $x_1 \neq y_1$ and $x_4 \neq y_4$, so that Theorem~\ref{thm:low_dim_degeneracy} does not apply. The flexible cycle
\begin{align}
    \vec{c}_1 &= \tfrac{1}{200}(89, 71, 40), \\
    \vec{c}_2 &= \tfrac{1}{200}(99, 54, 47),
\end{align}
satisfies both majorization relations of Theorem~\ref{thm:det_advantage}, as verified directly from Nielsen's criterion, and therefore realizes the transformation $\vec{x} \to \vec{y}$ via flexible catalysis. In contrast, no standard catalyst of dimension $k=3$ exists: applying the algorithm of Sun, Duan, and Ying~\cite{Sun_2005}, one can verify whether a $k$-dimensional standard catalyst exists and certify that no $k$-dimensional $\vec{c}$ satisfies $\vec{x} \otimes \vec{c} \prec \vec{y} \otimes \vec{c}$. For details about the algorithm in~\cite{Sun_2005} and its use, see Supplementary Material. Thus, flexible catalysis provides a strict \emph{deterministic} advantage over same-dimensional standard catalysis already in entanglement theory. This is not an isolated instance: further transformations spanning $d=4,5,6$ are collected in the Supplementary Material (Table~\ref{tab:instances}). Unlike the LOCC example in~\cite{weisz2025flexiblecatalysis}, where no individual member of the flexible set is itself a catalyst, yet the transformation is still achievable by an ordinary standard catalyst in a lower dimension, as we show explicitly in Supplementary Material.

Regarding flexible catalysis, we also prove that the maximally entangled state cannot belong to any sequence of flexible catalysts, and that flexible catalysis provides no advantage for system dimension $d=3$. Detailed proofs, along with further necessary conditions for flexible catalysis under standard majorization, are provided in the Supplementary Materials.

We now investigate whether flexible catalysis provides an advantage in this probabilistic regime. Moving beyond deterministic transitions, Vidal~\cite{Vidal_1999} extended Nielsen's framework to stochastic LOCC. The maximum success probability of converting a state $\ket{\psi}$ to $\ket{\phi}$ is entirely governed by their Schmidt vectors $\vec{x}$ and $\vec{y}$:
\begin{equation}
    P(\vec{x} \to \vec{y}) = \min_{l \in \{1, \dots, d\}} \frac{E_l(\vec{x})}{E_l(\vec{y})},
\end{equation}
where $E_l(\vec{x}) \coloneqq \sum_{i=l}^d x_i$ denotes the sum of the $d-l+1$ smallest Schmidt coefficients.

For an $n$-step flexible catalysis, the overall transformation succeeds only if every step $\vec{x} \otimes \vec{c}_i \to \vec{y} \otimes \vec{c}_{i+1}$ succeeds with a non-zero probability. To establish a fair comparison with single-shot standard catalysis, we define the \textit{success probability per step}, $P_{\text{flex}}$, as the maximum geometric mean over all valid cyclic sequences of $k$-dimensional Schmidt vectors $\{\vec{c}_i\}_{i=1}^n$:
\begin{equation}
    P_{\text{flex}}(\vec{x} \to \vec{y}) = \max_{\{\vec{c}_i\}}\left[\prod_{i=1}^nP(\vec{x} \otimes \vec{c}_i \to \vec{y} \otimes \vec{c}_{i+1})\right]^\frac{1}{n},
\end{equation}
subject to the condition $\vec{c}_{n+1} = \vec{c}_1$.

In contrast, standard catalysis requires a strictly invariant catalyst ($\vec{c}_i = \vec{c}$), with success probability $P_{\text{std}}(\psi \to \phi) =\max_{\vec{c}}P(\vec{x} \otimes \vec{c} \to \vec{y} \otimes \vec{c})$. Since this is a restricted case of a flexible sequence, $P_{\text{flex}} \ge P_{\text{std}}$ follows immediately. Crucially, we show that this inequality can be strict.
\begin{figure}[htbp]
    \centering
    \includegraphics[width=0.95\linewidth]{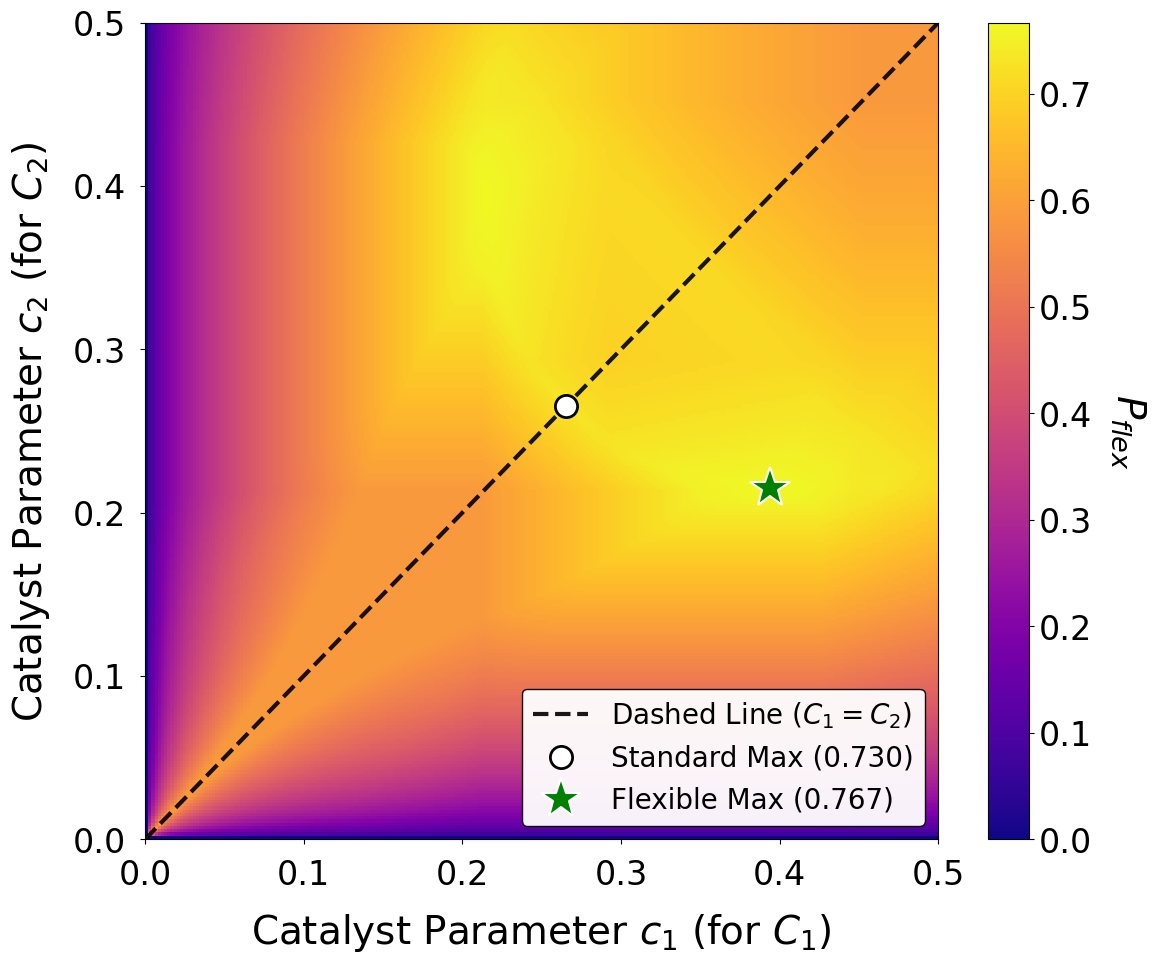}
    \caption{\textbf{Success probability landscape for a 2-step flexible catalysis of $k=2$.} The heatmap illustrates $P_{\text{flex}}$ as a function of the catalyst parameters $c_1$ and $c_2$ for the transformation $\vec{x} \to \vec{y}$. The dashed diagonal line marks the regime of standard catalysis ($c_1 = c_2$), where the maximum value is $P_{\text{std}} \approx 0.730$. The global maximum $P_{\text{flex}} \approx 0.767$ is achieved at an off-diagonal position (green star), proving the existence of a strict advantage $P_{\text{flex}} > P_{\text{std}}$.}
    \label{fig:flex_landscape}
\end{figure}
\begin{theorem}
\label{thm:strict_advantage}
There exist bipartite pure-state transformations where a flexible catalyst yields a higher success probability than any standard catalyst of the same dimension: $P_{\text{flex}} > P_{\text{std}}$.
\end{theorem}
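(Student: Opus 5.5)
This is an existence statement, so I will prove it with an explicit instance modeled on Fig.~\ref{fig:flex_landscape}: qubit catalysts ($k=2$), a cycle of length $n=2$, and a fixed pair $\vec{x},\vec{y}$ of small dimension, e.g.\ $d=3$ or $d=4$. Write $\vec{c}_1=(c_1,1-c_1)$ and $\vec{c}_2=(c_2,1-c_2)$ with $c_1,c_2\in[1/2,1]$. Then
\begin{equation}
F(c_1,c_2)=\Bigl[P(\vec{x}\otimes\vec{c}_1\to\vec{y}\otimes\vec{c}_2)\,P(\vec{x}\otimes\vec{c}_2\to\vec{y}\otimes\vec{c}_1)\Bigr]^{1/2},
\end{equation}
$P_{\text{std}}=\max_{c}F(c,c)$, and $P_{\text{flex}}\ge\max_{c_1,c_2}F(c_1,c_2)$. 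It therefore suffices to exhibit one off-diagonal point with $F(c_1^\ast,c_2^\ast)>P_{\text{std}}$.

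The lower bound is a finite computation. For fixed rational $c_1^\ast,c_2^\ast$, I sort the $2d$ entries of each tensor product, form the tail sums $E_l$, and evaluate Vidal's formula exactly. This gives a certified number, roughly $0.767$ in the figure's instance. Two remarks guide the choice of instance. First, the catalysts must not act deterministically: Theorem~\ref{thm:low_dim_degeneracy}(i) shows that any deterministic $k=2$ cycle already contains a standard catalyst, so $P_{\text{flex}}<1$ is needed for a genuine gap. Second, the intuition behind the advantage is that the two steps have different bottleneck indices $l$. Moving from $\vec{c}_1$ to $\vec{c}_2$ loosens the tail constraint binding in the first step, and the reverse move loosens a different constraint in the second step. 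I would pick $\vec{x},\vec{y}$ whose ratio $E_l(\vec{x})/E_l(\vec{y})$ is minimized at different $l$ depending on how the catalyst reshuffles the entries.

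The main obstacle is certifying the upper bound on $P_{\text{std}}$ over the continuum $c\in[1/2,1]$. I would partition this interval into finitely many subintervals on which the sorted orders of $\vec{x}\otimes\vec{c}$ and $\vec{y}\otimes\vec{c}$ are fixed. The breakpoints are the values of $c$ where $x_ic=x_j(1-c)$ or $y_ic=y_j(1-c)$, so there are at most $O(d^2)$ of them. On each subinterval every $E_l$ is affine in $c$, so each ratio $E_l(\vec{x}\otimes\vec{c})/E_l(\vec{y}\otimes\vec{c})$ is a linear-fractional, hence monotone, function of $c$. Their minimum is a quasi-concave piecewise function, and its maximum occurs either at a breakpoint or where two ratios cross. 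Crossings are roots of quadratics in $c$. Enumerating all these candidate points and evaluating exactly gives $P_{\text{std}}$ as a maximum over finitely many explicit values, about $0.730$. Comparing this with the flexible value completes the proof.

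If the hand enumeration is too long, I would reduce it by proving that one tail index dominates on each subinterval, using the Schmidt ordering. Alternatively, I would report the finitely many candidate values in the Supplementary Material, analogous to the use of the algorithm of Ref.~\cite{Sun_2005} for Theorem~\ref{thm:det_advantage}.
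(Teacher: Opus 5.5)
Your proposal is correct and follows essentially the same approach as the paper: an explicit $d=4$, $k=2$, two-step instance, with $P_{\text{flex}}$ bounded below by evaluating one off-diagonal catalyst pair and compared against the best diagonal (standard) catalyst. The paper only reads $P_{\text{std}}$ off the numerical landscape of Fig.~\ref{fig:flex_landscape}, so your exact piecewise linear-fractional certification adds rigor the paper lacks; by my hand computation on the paper's instance it gives $P_{\text{std}}\approx 0.7304$, attained at the catalyst $(0.7347,\,0.2653)$ where the $l=4$ and $l=7$ tail ratios cross, which is marginally above the reported $0.7299$ but still well below the flexible value $0.7666$.
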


Since a deterministic transformation is the case $P_{\text{flex}}=1$, the advantage of Theorem~\ref{thm:det_advantage} is already an instance with $P_{\text{flex}}=1>P_{\text{std}}$; Theorem~\ref{thm:strict_advantage} shows the advantage persists in the probabilistic regime $P_{\text{flex}}<1$. To demonstrate this advantage, we consider a system of dimension $d=4$ with Schmidt vectors:
\begin{align}
    \vec{x} &= (0.5789, 0.2691, 0.0872, 0.0648), \\
    \vec{y} &= (0.4937, 0.2468, 0.2043, 0.0552).
\end{align}
Without assistance, the base transition probability is $P_{\text{base}} \approx 0.5857$. We introduce an auxiliary system of dimension $k=2$, parameterizing the Schmidt vectors in flexible catalysts of $n=2$ as $\vec{C}_1 = (1-c_1, c_1)$ and $\vec{C}_2 = (1-c_2, c_2)$. 

The resulting success probability landscape is shown in Fig.~\ref{fig:flex_landscape}. The subspace of standard catalysis, defined by the diagonal $c_1 = c_2$, reaches a local maximum of $P_{\text{std}} \approx 0.7299$ at $c_1 = c_2 \approx 0.2651$ (white circle). However, the global maximum of the landscape (green star) clearly resides at an off-diagonal point $c_1 \approx 0.3936, c_2 \approx 0.2149$. This yields a success probability per step of $P_{\text{flex}} \approx 0.7666$, representing a relative gain of approximately 5\% over the optimal standard catalyst.

\textit{Flexible Catalysis in Quantum Thermodynamics}---
We now move on to state transformations in quantum thermodynamics. Just as Nielsen's theorem governs entanglement transformations~\cite{Nielsen_1999}, deterministic state transitions in quantum thermodynamics are dictated by thermo-majorization. When considering energy-incoherent states, i.e. those that commute with the system's Hamiltonian ($[\rho, H] = 0$), general quantum thermodynamic operations simplify to stochastic matrices~\cite{Horodecki_2013,PRXQuantum.3.040323,Gour_2025}. To formalize this, we must define the allowed free operations. The fundamental set of free operations is the set of \textit{thermal operations} (TO)~\cite{Janzing2000,Horodecki_2013}. A thermal operation $\mathcal{E} \in \text{TO}(A \rightarrow A)$ is a quantum channel of the form:
\begin{equation}
    \mathcal{E}(\rho^A) = \text{Tr}_B \left[ \mathcal{U} (\rho^A \otimes \gamma^B) \mathcal{U}^\dagger \right].
\end{equation}
where the system $A$, governed by Hamiltonian $H_A$, is coupled to an arbitrary heat bath $B$, governed by $H_B$, $\mathcal{U}$ is a global unitary that satisfies $[\mathcal{U}, H_A \otimes I_B + I_A \otimes H_B] = 0$ and $\gamma^B = (1/Z)\exp{(-\beta H_B)}$ is a Gibbs state with $\beta = 1/k_B T$ and $Z=\operatorname{Tr}[\exp{(-\beta H_B)}]$. 
Since the dimension of the bath $B$ is unbounded, the set of thermal operations $\text{TO}(A \rightarrow A)$ is generally neither closed nor convex~\cite{PRXQuantum.3.040323}. To ensure a robust mathematical framework, it is convenient to define \textit{Closed Thermal Operations} (CTO)~\cite{PRXQuantum.3.040323} as the topological closure of this set. A channel $\mathcal{E}$ belongs to $\text{CTO}(A \rightarrow A)$ if and only if there exists a sequence of thermal operations $\{\mathcal{E}_k\}_{k \in \mathbb{N}}$ such that:
$$\lim_{k \rightarrow \infty} \mathcal{E}_k = \mathcal{E}.$$
An even broader class of free operations is \textit{Gibbs-Preserving Operations} (GPO)~\cite{Faist_2015}, consisting of any CPTP map $\mathcal{E}$ such that $\mathcal{E}(\gamma^A)=\gamma^A$ where  $\gamma^A$ is a Gibbs state. Since thermal operations inherently preserve the Gibbs state, we have $\text{CTO}\subset\text{GPO}$~\cite{PRXQuantum.3.040323}. Remarkably, for energy-incoherent states, the operational power of these two sets is entirely equivalent. This is formalized as follows.

\begin{theorem}[Equivalence of Thermodynamic Transitions~\cite{PRXQuantum.3.040323,Gour_2025}]
\label{thm:thermo_equivalence}
Consider a quantum system with Hamiltonian $H$ at inverse temperature $\beta$. For energy-incoherent states $\rho$ and $\sigma$ ($[\rho, H] = [\sigma, H] = 0$), let the $d$-dimensional probability vectors $\vec{p}$ and $\vec{q}$ denote their respective eigenvalue spectra. Given the Gibbs vector $\vec{\gamma}$ with components $\gamma_i = e^{-\beta E_i}/Z$, where $Z=\sum_i e^{-\beta E_i}$, the following statements are equivalent for the exact transformation $\rho \to \sigma$:
\begin{enumerate}
    \item[(i)] It is achievable via a Closed Thermal Operation (CTO)---or equivalently, via standard Thermal Operations (TO) up to arbitrary precision.
    \item[(ii)] It is achievable via a Gibbs-Preserving Operation (GPO).
    \item[(iii)] $\vec{p}$ thermo-majorizes $\vec{q}$ with respect to $\vec{\gamma}$, denoted as $\vec{p} \succ_\beta \vec{q}$.
\end{enumerate}
\end{theorem}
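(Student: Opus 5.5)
Theorem~\ref{thm:thermo_equivalence} is a cited result, so the plan is to close the cycle of implications (i)$\Rightarrow$(ii)$\Rightarrow$(iii)$\Rightarrow$(i).

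\emph{(i)$\Rightarrow$(ii).} Every thermal operation preserves $\gamma^A$. Indeed, $\gamma^A\otimes\gamma^B$ is a function of $H_A\otimes I+I\otimes H_B$, so it commutes with $\mathcal{U}$. Hence $\mathcal{U}(\gamma^A\otimes\gamma^B)\mathcal{U}^\dagger=\gamma^A\otimes\gamma^B$, and tracing out $B$ returns $\gamma^A$. The set of GPOs is closed, since the condition $\mathcal{E}(\gamma)=\gamma$ is closed under limits, so the closure CTO is also contained in GPO.

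\emph{(ii)$\Rightarrow$(iii).} First I reduce to the classical setting. Let $\mathcal{D}$ be dephasing in the energy eigenbasis; it is itself a GPO. Since $\rho$ and $\sigma$ are diagonal, $\mathcal{D}\circ\mathcal{E}$ also maps $\rho\mapsto\sigma$ and $\gamma\mapsto\gamma$. Its action on diagonals is a stochastic matrix $G$ with $G\vec{p}=\vec{q}$ and $G\vec\gamma=\vec\gamma$. Next, rescale by the Gibbs weights: writing $\Gamma=\mathrm{diag}(\vec\gamma)$, the vectors $\Gamma^{-1}\vec p$ and $\Gamma^{-1}\vec q$ are related by a map that is doubly stochastic with respect to the measure $\vec\gamma$. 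I then invoke the standard characterization of $d$-majorization (Ruch--Mead, Marshall--Olkin). Existence of a $\gamma$-preserving stochastic $G$ with $G\vec p=\vec q$ holds iff $\sum_i\gamma_i f(p_i/\gamma_i)\ge\sum_i\gamma_i f(q_i/\gamma_i)$ for all convex $f$. This in turn holds iff the thermo-majorization curve of $\vec p$, i.e.\ the piecewise-linear concave curve through the points $(\sum_{j\le l}\gamma_{\pi(j)},\sum_{j\le l}p_{\pi(j)})$ with $\pi$ the $\beta$-order sorting $p_i/\gamma_i$ decreasingly, lies above that of $\vec q$. Given $G$, the convex-function inequality follows from Jensen, which yields (iii).

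\emph{(iii)$\Rightarrow$(i).} This is the main obstacle, since I must construct thermal operations explicitly. The plan follows Horodecki--Oppenheim. First, approximate $\vec\gamma$ by rational weights $\gamma_i\approx g_i/N$. Then embed the system into a bath chosen so that the energy level $E_i$ has degeneracy proportional to $g_i$ in the joint energy shells. Under this embedding, each Gibbs-rescaled distribution becomes a uniform spreading, and thermo-majorization becomes ordinary majorization of the embedded vectors. By Birkhoff's theorem, ordinary majorization is implemented by a mixture of permutations. Each permutation acts within an energy-degenerate subspace of system plus bath, so it commutes with the total Hamiltonian; the mixture is obtained through additional bath randomness. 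This gives a TO achieving $\vec q$ up to an error that vanishes as $N\to\infty$. Taking the limit yields a CTO achieving the transformation exactly. The delicate points are irrational Gibbs weights and keeping control of the approximation error, which is precisely why the closure in the definition of CTO is needed. Finally, (ii)$\Rightarrow$(i) follows by composing the chain.
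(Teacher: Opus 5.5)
Your outline is correct and is the standard proof behind this result. The paper gives no argument of its own and simply cites Gour's work, so your route is the literature's: Gibbs-invariance of TO and closedness of GPO, then dephasing to a Gibbs-preserving stochastic matrix plus $d$-majorization and Jensen for necessity, then the Horodecki--Oppenheim embedding into energy shells with degeneracies matched to $\vec\gamma$, combined with Birkhoff, for sufficiency.

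One step needs tightening. The relation $\vec p\succ_{\vec\gamma}\vec q$ does not transfer to a rational approximant $\vec\gamma_N$, so the embedded vectors need not satisfy ordinary majorization. Instead you should aim at a perturbed target $\vec q_N=T_N\vec q\to\vec q$, where $T_N$ is a near-identity stochastic matrix sending $G\vec\gamma_N$ to $\vec\gamma_N$; then $T_NG$ fixes $\vec\gamma_N$ and witnesses $\vec p\succ_{\vec\gamma_N}\vec q_N$. Finally, you obtain the exact CTO by extracting a convergent subsequence of channels via compactness. The same compactness argument also gives the stated equivalence between exact CTO and TO up to arbitrary precision.
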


\begin{figure*}[htbp]
    \centering
    \includegraphics[width=\textwidth, keepaspectratio]{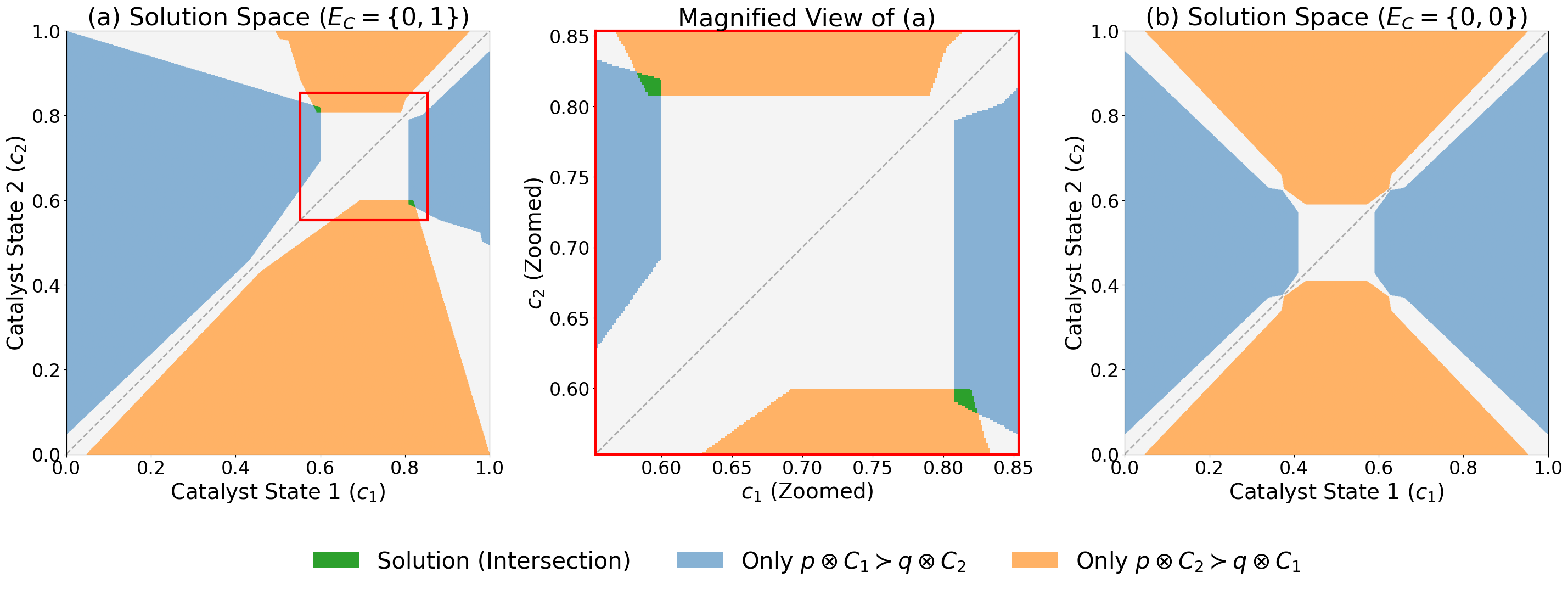}
    \caption{\textbf{Solution landscapes for flexible thermo-majorization.} Parameter space of a 2-cycle flexible catalyst pair $(c_1, c_2)$ for the transition $\vec{p} \to \vec{q}$. \textbf{(a)} For a non-trivial catalyst Hamiltonian $E_C=\{0,1\}$, a valid solution space (green region) exists off the diagonal ($c_1 \neq c_2$). The dashed diagonal line representing standard catalysis does not intersect this region. \textbf{(b)} For a degenerate Hamiltonian $E_C=\{0,0\}$, the valid region is empty.}
    \label{fig:proof_demonstration}
\end{figure*}

The \textit{thermo-majorization} relation~\cite{Horodecki_2013} is operationally defined and verified using the thermo-Lorenz curve $L_{\beta}(\vec{p})$. By applying a permutation $\pi$ that sorts the ratios $p_{\pi(i)}/\gamma_{\pi(i)}$ in non-increasing order, $L_{\beta}(\vec{p})$ is constructed as the piecewise linear curve connecting the origin to the points $(\sum_{i=1}^k \gamma_{\pi(i)}, \sum_{i=1}^k p_{\pi(i)})$ for $k=1, \dots, d$. The condition $\vec{p} \succ_\beta \vec{q}$ holds if and only if $L_{\beta}(\vec{p}) \ge L_{\beta}(\vec{q})$ everywhere. Mathematically, this condition guarantees the existence of a Gibbs-preserving stochastic matrix $M$ (satisfying $M\vec{\gamma} = \vec{\gamma}$) such that $M\vec{p} = \vec{q}$~\cite{Lostaglio_2019}.

This thermodynamic framework highlights a crucial distinction between majorization and thermo-majorization (or, more generally, relative majorization~\cite{Renes_2016}). Majorization corresponds to the special case where the reference states of both the system and the catalyst are maximally mixed. See Supplemental Materials for more details. 

Consequently, in majorization, fixing the dimension of a catalyst implicitly fixes its reference state. In the context of thermo-majorization, however, a constraint on the dimension alone is insufficient, it is also necessary to specify the explicit forms of the Hamiltonians for both the system and the catalyst.

We now turn to flexible catalysis in quantum thermodynamics, where, as in entanglement theory, it provides a deterministic advantage. Since for a two-level system no catalyst of any dimension can surpass direct thermo-majorization~\cite{CatThermalOps}, and a flexible cycle is itself reproducible by a standard catalyst in higher dimension (see Supplemental Material), a thermodynamic advantage requires $d\ge 3$, making the example below the smallest possible.

\begin{theorem}\label{thm:thermo_advantage}
For a fixed inverse temperature $\beta$ and fixed system and catalyst Hamiltonians, there exist states $\vec{p}$ and $\vec{q}$ such that no two-dimensional standard catalyst $\vec{c}$ can enable the transition ($\vec{p} \otimes \vec{c} \not\succ_{\beta} \vec{q} \otimes \vec{c}$). However, there exists a two-dimensional flexible catalyst pair $(\vec{c}_1, \vec{c}_2)$ that satisfies:
\begin{equation}\label{eqn:thermo_advantage}
    \vec{p} \otimes \vec{c}_1 \succ_{\beta} \vec{q} \otimes \vec{c}_2 \quad \text{and} \quad \vec{p} \otimes \vec{c}_2 \succ_{\beta} \vec{q} \otimes \vec{c}_1.
\end{equation}
\end{theorem}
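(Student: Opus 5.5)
The plan is to prove Theorem~\ref{thm:thermo_advantage} by an explicit instance with $d=3$ (the smallest possible, as argued above), a two-level catalyst with Hamiltonian $E_C=\{0,1\}$, and a fixed $\beta$ (say $\beta=1$) and system energies. The joint Gibbs vector is then $\vec{\gamma}\otimes\vec{\gamma}_C$. Parametrize the catalyst states as $\vec{c}_i=(1-c_i,c_i)$. Every relation $\vec{p}\otimes\vec{c}\succ_\beta\vec{q}\otimes\vec{c}'$ reduces to finitely many inequalities: $L_\beta(\vec{p}\otimes\vec{c})$ is concave and piecewise linear, so it dominates $L_\beta(\vec{q}\otimes\vec{c}')$ iff it lies above the elbow points of the latter. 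Each elbow condition is piecewise polynomial (bilinear) in $(c,c')$, once we fix the $\beta$-ordering of the six ratios $p_jc_m/(\gamma_j\gamma_{C,m})$, and similarly for $\vec q$. So the whole problem lives in the square $(c_1,c_2)\in[0,1]^2$, which is the landscape of Fig.~\ref{fig:proof_demonstration}(a).

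The construction proceeds in three steps.

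\emph{Step 1: choose $\vec p,\vec q$.} Pick $\vec p,\vec q$ that are thermo-incomparable, so that $\vec p\not\succ_\beta\vec q$. To search for good candidates, I would reverse-engineer: fix a target pair $(c_1,c_2)$ with $c_1\neq c_2$ and demand that both relations in Eq.~\eqref{eqn:thermo_advantage} hold with some slack.

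\emph{Step 2: verify the flexible pair.} For the chosen $(\vec c_1,\vec c_2)$, verify both relations in Eq.~\eqref{eqn:thermo_advantage} directly. This is a finite, exact-arithmetic check: sort the ratios, compute the elbows, and compare against the linear interpolation of the other curve. Using rational entries (and $e^{-\beta}$ kept symbolic, or a $\beta$ chosen so that all quantities are algebraic) makes the check rigorous.

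\emph{Step 3: exclude every standard catalyst.} Show that no $c\in[0,1]$ gives $\vec p\otimes\vec c\succ_\beta\vec q\otimes\vec c$. This is the main obstacle, since it is a universal statement over a continuum. My plan is to partition $[0,1]$ into finitely many intervals on which the $\beta$-orderings of both $\vec p\otimes\vec c$ and $\vec q\otimes\vec c$ are constant. The breakpoints are the values of $c$ where two ratios coincide, which are solvable linear equations in $c/(1-c)$.

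On each such interval, every elbow condition $F_m(c)\ge 0$ is an explicit rational function of low degree. It then suffices to exhibit, for each interval, one elbow condition that is strictly negative throughout. This can be done either by sign analysis of the numerator polynomial or by checking its values at the endpoints together with monotonicity or concavity. At the endpoints $c=0$ and $c=1$ the catalyst is pure, so the condition reduces to $\vec p\succ_\beta\vec q$ in a scaled form, which fails by incomparability.

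Two natural witnesses for failure are the largest-ratio elbow, since $\vec q$ must be sharper somewhere, and an intermediate elbow where $\vec p$'s advantage is lost. The choice of $\vec p,\vec q$ in Step 1 should be tuned so that these two constraints jointly cut out the diagonal $c_1=c_2$ while leaving an off-diagonal feasible region.

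Finally, as a sanity remark, the same analysis with the degenerate Hamiltonian $E_C=\{0,0\}$ gives ordinary majorization on the catalyst. By Theorem~\ref{thm:low_dim_degeneracy}(i)-type reasoning, any flexible pair there collapses to a standard catalyst. This explains Fig.~\ref{fig:proof_demonstration}(b) and shows that the non-trivial catalyst Hamiltonian is essential to the example.
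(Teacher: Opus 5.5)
Your strategy is the paper's: Theorem~\ref{thm:thermo_advantage} is proved by one explicit instance with $d=3$, $\beta=1$, $E_C=\{0,1\}$. In it, both flexible relations are checked directly and the diagonal $\vec c_1=\vec c_2$ is shown to be infeasible. Your Step~3 is in fact more rigorous than the paper, which reads the emptiness of the diagonal off the numerical landscape in Fig.~\ref{fig:proof_demonstration}(a).

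The gap is that you never produce the instance. Step~1 is only a search heuristic, and for an existence claim proved by example, the example is the entire content. The tuning that would make the diagonal infeasible is also left unspecified. The paper's data are $E_S=\{0,1,2\}$, $\vec p=(0.09,0.53,0.38)$, $\vec q=(0.11,0.75,0.14)$, $\vec c_1=(0.82,0.18)$, $\vec c_2=(0.59,0.41)$. Your Step~2 goes through for them, with small slack: the tightest elbow, at $x\approx0.821$ in $\vec p\otimes\vec c_2\succ_\beta\vec q\otimes\vec c_1$, reduces to $0.09\cdot0.59\,e^{-1}\le 0.11\cdot0.18$.

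Running your Step~3 on this instance shows that the procedure needs a correction. Write $\vec c=(1-t,t)$. The elbow abscissae depend only on the orderings, so on each cell every elbow condition is affine in $t$ (not merely rational), and each cell is a one-dimensional linear feasibility problem.

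Your sufficient test, one elbow condition negative on the whole cell, fails on the cell $0.159<t<0.269$. There, the first elbow of $L_\beta(\vec q\otimes\vec c)$, at $x=\gamma_2\gamma_{C,0}\approx0.179$ with $\gamma_{C,0}=1/(1+e^{-1})$, is violated only for $t<0.219$. The elbow at $x=\gamma_2\approx0.245$ is violated only for $t>0.193$, and every other elbow holds. Infeasibility comes from these two half-lines being disjoint. So on each cell you must compare the largest lower bound on $t$ with the smallest upper bound, or split the cell. This is exactly the ``two constraints jointly'' mechanism you anticipate. Elsewhere a single witness does suffice: the first elbow for $t<0.159$ and for $t>0.418$, and the elbow at $x\approx0.269$ for $0.269\le t\le0.418$.

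Your closing remark on $E_C=\{0,0\}$ is correct that a flexible pair collapses to a standard catalyst. There you pick the purest member of the cycle, the thermodynamic counterpart of the minimal element in Theorem~\ref{thm:low_dim_degeneracy}(i). But this only reduces Fig.~\ref{fig:proof_demonstration}(b) to a standard-catalyst check that still has to be carried out.
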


We explicitly demonstrate this using a system with energy levels $E_S=\{0,1,2\}$, which fully characterize the system's Hamiltonian $H_S$ in the energy eigenbasis, coupled to a thermal bath at $\beta=1$, and the transition $\vec{p}=(0.09, 0.53, 0.38)^T \to \vec{q}=(0.11, 0.75, 0.14)^T$. Note that, $\vec{p}\not\succ_{\beta}\vec{q}$. We employ a two-dimensional catalyst system ($k=2$) defined similarly by its energy levels.

As shown in Fig.~\ref{fig:proof_demonstration}~(a), with an auxiliary system having non-trivial energy levels $E_C=\{0,1\}$, the transformation between $\vec{p}$ and $\vec{q}$ is possible (green region) only when $\vec{c}_1\neq \vec{c}_2$. For example, the pair $\vec{c}_1=(0.82, 0.18)^T$ and $\vec{c}_2=(0.59, 0.41)^T$ satisfies the thermo-majorization conditions in Eq.~\eqref{eqn:thermo_advantage}. The complete lack of valid transformation  between $\vec{p}$ and $\vec{q}$ using a standard catalyst, or when $\vec{c}_1 = \vec{c}_2$, confirms the failure of standard catalysis. The value of this result lies in the fact that in a practical physical situation where the catalyst system's Hamiltonian is fixed, transformations that cannot be achieved by standard catalysts may be realized using flexible catalysts. Such a constraint is physically motivated: a catalyst is realized on a physical system, like on an atomic or trapped-ion level structure~\cite{trap}, a fixed-frequency superconducting qubit~\cite{super}, or the nuclear spins of a molecule~\cite{nmr}, where energy levels are set by fabrication or by the chosen species and are not re-engineered for each target transformation; only the catalyst's state within this fixed level structure is prepared and cycled.

Crucially, while a trivial catalyst Hamiltonian ($H_C \propto \openone$) is always sufficient when the catalyst dimension is unbounded~\cite{Brandao_2015}, the non-degeneracy of the energy levels becomes necessary under finite-dimensional constraints for some transformations. We explicitly observe this in Fig.~\ref{fig:proof_demonstration}~(b): assigning degenerate catalyst energy levels ($E_C=\{0,0\}$) causes the valid solution space to vanish, showing that a non-trivial energy structure of the catalyst is necessary for this transformation.

\textit{Discussion and Conclusion}---
In this work, we have investigated the capabilities and limitations of flexible catalysis within the resource theories of entanglement and quantum thermodynamics. Our results establish that flexible catalysis provides a strict deterministic advantage over standard catalysis in both resource theories. Rather than depending on the specific physical context, we reveal that this advantage is united by a common underlying mechanism: it is fundamentally a fixed-dimension phenomenon that emerges strictly when the available dimension of the auxiliary system is restricted.

First, within the framework of entanglement theory (standard majorization), we exhibit an explicit $d=4$ pure-state transformation realized by a $k=3$ flexible cycle but admitting no standard catalyst of the same dimension; the impossibility is rigorously certified using the algorithm of Sun, Duan, and Ying~\cite{Sun_2005} (Theorem~\ref{thm:det_advantage}). This advantage persists in the probabilistic regime, where a $k=2$ flexible cycle attains a success probability strictly above the maximum reachable by any standard catalyst of the same dimension (Theorem~\ref{thm:strict_advantage}). We also delineate the strict boundaries of this effect by proving that it vanishes for catalyst dimension $k=2$ in the deterministic setting and for system dimension $d=3$ in entanglement. In the Supplementary Materials we derive further necessary conditions on the components of $\{\vec c_i\}$ in any flexible cycle, and show that a maximally entangled state can never serve as a member of flexible catalysts.

Second, in quantum thermodynamics, we demonstrate a concrete example where flexible catalysis achieves a state transformation impossible for any standard catalyst \emph{of the same dimension} with a fixed catalyst Hamiltonian, and this advantage already appears at system dimension $d=3$, the smallest possible.

Several open questions remain. First, our current analysis relies on pure bipartite states in entanglement theory and energy-incoherent states in quantum thermodynamics. The capabilities of flexible catalysis for general mixed states have yet to be thoroughly investigated. Second, in the thermodynamic context, it would be valuable to quantify the ``gap'' between flexible and standard catalysis: How does the advantage scale with the dimension of the catalyst or the complexity of the Hamiltonian spectrum? Finally, extending this analysis to the full quantum regime with coherence could reveal whether the advantages observed here persist when quantum superpositions are involved.

\textit{Acknowledgments}--- This work was supported by the National Science Centre Poland (Grant No. 2022/46/E/ST2/00115 and 2024/55/B/ST2/01590) and within the QuantERA II Programme (Grant No. 2021/03/Y/ST2/00178, acronym ExTRaQT) that has received funding from the European Union’s Horizon 2020 research and innovation programme under Grant Agreement No. 101017733.

\bibliography{references}

@Article{Jonathan_1999,
  author    = {Jonathan, Daniel and Plenio, Martin B.},
  journal   = {Physical Review Letters},
  month     = oct,
  title     = {Entanglement-Assisted Local Manipulation of Pure Quantum States},
  year      = {1999},
  issn      = {1079-7114},
  number    = {17},
  pages     = {3566--3569},
  volume    = {83},
  doi       = {10.1103/physrevlett.83.3566},
  publisher = {American Physical Society (APS)},
}

@Article{Grabowecky_2019,
  author    = {Grabowecky, Michael and Gour, Gilad},
  journal   = {Physical Review A},
  month     = {May},
  title     = {Bounds on entanglement catalysts},
  year      = {2019},
  pages     = {052348},
  volume    = {99},
  doi       = {10.1103/PhysRevA.99.052348},
  issue     = {5},
  numpages  = {8},
  publisher = {American Physical Society},
}

@Misc{weisz2025flexiblecatalysis,
  author        = {Máté Weisz and Sergii Strelchuk},
  title         = {Flexible Catalysis},
  year          = {2025},
  archiveprefix = {arXiv},
  eprint        = {2510.01065},
  eprinttype    = {arxiv},
  primaryclass  = {quant-ph},
  url           = {https://arxiv.org/abs/2510.01065}, 
}

@Article{Vidal_1999,
  author    = {Vidal, Guifré},
  journal   = {Physical Review Letters},
  month     = aug,
  title     = {Entanglement of Pure States for a Single Copy},
  year      = {1999},
  issn      = {1079-7114},
  number    = {5},
  pages     = {1046--1049},
  volume    = {83},
  doi       = {10.1103/physrevlett.83.1046},
  publisher = {American Physical Society (APS)},
}

@Article{Horodecki_2013,
  author    = {Horodecki, Michał and Oppenheim, Jonathan},
  journal   = {Nature Communications},
  month     = jun,
  title     = {Fundamental limitations for quantum and nanoscale thermodynamics},
  year      = {2013},
  issn      = {2041-1723},
  number    = {1},
  pages     = {2059},
  volume    = {4},
  doi       = {10.1038/ncomms3059},
  publisher = {Springer Science and Business Media LLC},
}

@Article{Nielsen_1999,
  author    = {Nielsen, M. A.},
  journal   = {Physical Review Letters},
  month     = jul,
  title     = {Conditions for a Class of Entanglement Transformations},
  year      = {1999},
  issn      = {1079-7114},
  number    = {2},
  pages     = {436--439},
  volume    = {83},
  doi       = {10.1103/physrevlett.83.436},
  publisher = {American Physical Society (APS)},
}

@Article{van_Dam_2003,
  author    = {van Dam, Wim and Hayden, Patrick},
  journal   = {Physical Review A},
  month     = jun,
  title     = {Universal entanglement transformations without communication},
  year      = {2003},
  issn      = {1094-1622},
  number    = {6},
  pages     = {060302},
  volume    = {67},
  doi       = {10.1103/physreva.67.060302},
  publisher = {American Physical Society (APS)},
}

@Article{Ng_2015,
  author    = {Ng, N H Y and Mančinska, L and Cirstoiu, C and Eisert, J and Wehner, S},
  journal   = {New Journal of Physics},
  month     = aug,
  title     = {Limits to catalysis in quantum thermodynamics},
  year      = {2015},
  issn      = {1367-2630},
  number    = {8},
  pages     = {085004},
  volume    = {17},
  doi       = {10.1088/1367-2630/17/8/085004},
  publisher = {IOP Publishing},
}

@Article{Brandao_2015,
  author    = {Brandão, Fernando and Horodecki, Michał and Ng, Nelly and Oppenheim, Jonathan and Wehner, Stephanie},
  journal   = {Proceedings of the National Academy of Sciences},
  month     = feb,
  title     = {The second laws of quantum thermodynamics},
  year      = {2015},
  issn      = {1091-6490},
  number    = {11},
  pages     = {3275--3279},
  volume    = {112},
  doi       = {10.1073/pnas.1411728112},
  publisher = {Proceedings of the National Academy of Sciences},
}

@Article{PhysRevX.8.041051,
  author    = {M\"uller, Markus P.},
  journal   = {Physical Review X},
  month     = {Dec},
  title     = {Correlating Thermal Machines and the Second Law at the Nanoscale},
  year      = {2018},
  pages     = {041051},
  volume    = {8},
  doi       = {10.1103/PhysRevX.8.041051},
  issue     = {4},
  numpages  = {23},
  publisher = {American Physical Society},
}

@Article{Datta_2023,
  author    = {Datta, Chandan and Varun Kondra, Tulja and Miller, Marek and Streltsov, Alexander},
  journal   = {Reports on Progress in Physics},
  month     = oct,
  title     = {Catalysis of entanglement and other quantum resources},
  year      = {2023},
  issn      = {1361-6633},
  number    = {11},
  pages     = {116002},
  volume    = {86},
  doi       = {10.1088/1361-6633/acfbec},
  publisher = {IOP Publishing},
}

@Article{Duan_2005,
  author    = {Duan, Runyao and Feng, Yuan and Li, Xin and Ying, Mingsheng},
  journal   = {Physical Review A},
  month     = apr,
  title     = {Multiple-copy entanglement transformation and entanglement catalysis},
  year      = {2005},
  issn      = {1094-1622},
  number    = {4},
  pages     = {042319},
  volume    = {71},
  doi       = {10.1103/physreva.71.042319},
  publisher = {American Physical Society (APS)},
}

@Article{Renes_2016,
  author    = {Renes, Joseph M.},
  journal   = {Journal of Mathematical Physics},
  month     = dec,
  title     = {Relative submajorization and its use in quantum resource theories},
  year      = {2016},
  issn      = {1089-7658},
  number    = {12},
  pages     = {122202},
  volume    = {57},
  doi       = {10.1063/1.4972295},
  publisher = {AIP Publishing},
}

@Article{PRXQuantum.3.040323,
  author    = {Gour, Gilad},
  journal   = {PRX Quantum},
  month     = {Nov},
  title     = {Role of Quantum Coherence in Thermodynamics},
  year      = {2022},
  pages     = {040323},
  volume    = {3},
  doi       = {10.1103/PRXQuantum.3.040323},
  issue     = {4},
  numpages  = {23},
  publisher = {American Physical Society},
}

@Article{Faist_2015,
  author    = {Faist, Philippe and Oppenheim, Jonathan and Renner, Renato},
  journal   = {New Journal of Physics},
  month     = apr,
  title     = {Gibbs-preserving maps outperform thermal operations in the quantum regime},
  year      = {2015},
  issn      = {1367-2630},
  number    = {4},
  pages     = {043003},
  volume    = {17},
  doi       = {10.1088/1367-2630/17/4/043003},
  publisher = {IOP Publishing},
}

@Book{Horn_Johnson_1985,
  author    = {Horn, Roger A. and Johnson, Charles R.},
  publisher = {Cambridge University Press},
  title     = {Matrix Analysis},
  year      = {1985},
  place     = {Cambridge},
  doi       = {10.1017/CBO9781139020411}
}

@Article{Chitambar_2014,
  author    = {Chitambar, Eric and Leung, Debbie and Mančinska, Laura and Ozols, Maris and Winter, Andreas},
  journal   = {Communications in Mathematical Physics},
  month     = mar,
  title     = {{Everything You Always Wanted to Know About LOCC (But Were Afraid to Ask)}},
  year      = {2014},
  issn      = {1432-0916},
  number    = {1},
  pages     = {303--326},
  volume    = {328},
  doi       = {10.1007/s00220-014-1953-9},
  publisher = {Springer Science and Business Media LLC},
}

@Article{Janzing2000,
  author  = {Janzing, D. and Wocjan, P. and Zeier, R. and Geiss, R. and Beth, Th.},
  journal = {International Journal of Theoretical Physics},
  month   = {Dec},
  title   = {{Thermodynamic Cost of Reliability and Low Temperatures: Tightening Landauer's Principle and the Second Law}},
  year    = {2000},
  issn    = {1572-9575},
  number  = {12},
  pages   = {2717--2753},
  volume  = {39},
  day     = {01},
  doi     = {10.1023/A:1026422630734},
}

@Book{Gour_2025,
  author    = {Gour, Gilad},
  publisher = {Cambridge University Press},
  title     = {Quantum Resource Theories},
  year      = {2025},
  isbn      = {9781009560917},
  month     = apr,
  doi       = {10.1017/9781009560870},
}

@Article{RevModPhys.81.865,
  author    = {Horodecki, Ryszard and Horodecki, Pawe\l{} and Horodecki, Micha\l{} and Horodecki, Karol},
  journal   = {Reviews of Modern Physics},
  month     = {Jun},
  title     = {Quantum entanglement},
  year      = {2009},
  pages     = {865--942},
  volume    = {81},
  doi       = {10.1103/RevModPhys.81.865},
  issue     = {2},
  numpages  = {0},
  publisher = {American Physical Society},
}

@Article{PhysRevX.5.021001,
  author    = {Lostaglio, Matteo and Korzekwa, Kamil and Jennings, David and Rudolph, Terry},
  journal   = {Physical Review X},
  month     = {Apr},
  title     = {{Quantum Coherence, Time-Translation Symmetry, and Thermodynamics}},
  year      = {2015},
  pages     = {021001},
  volume    = {5},
  doi       = {10.1103/PhysRevX.5.021001},
  issue     = {2},
  numpages  = {11},
  publisher = {American Physical Society},
}

@Article{Lostaglio_2019,
  author    = {Lostaglio, Matteo},
  journal   = {Reports on Progress in Physics},
  month     = oct,
  title     = {An introductory review of the resource theory approach to thermodynamics},
  year      = {2019},
  issn      = {1361-6633},
  number    = {11},
  pages     = {114001},
  volume    = {82},
  doi       = {10.1088/1361-6633/ab46e5},
  publisher = {IOP Publishing},
}

@Article{RevModPhys.96.025005,
  author    = {Lipka-Bartosik, Patryk and Wilming, Henrik and Ng, Nelly H. Y.},
  journal   = {Reviews of Modern Physics},
  month     = {Jun},
  title     = {Catalysis in quantum information theory},
  year      = {2024},
  pages     = {025005},
  volume    = {96},
  doi       = {10.1103/RevModPhys.96.025005},
  issue     = {2},
  numpages  = {56},
  publisher = {American Physical Society},
}

@article{PhysRevLett.122.210402,
  title = {{Von Neumann Entropy from Unitarity}},
  author = {Boes, Paul and Eisert, Jens and Gallego, Rodrigo and M\"uller, Markus P. and Wilming, Henrik},
  journal = {Physical Review Letters},
  volume = {122},
  issue = {21},
  pages = {210402},
  numpages = {6},
  year = {2019},
  month = {May},
  publisher = {American Physical Society},
  doi = {10.1103/PhysRevLett.122.210402},
  url = {https://link.aps.org/doi/10.1103/PhysRevLett.122.210402}
}

@article{PhysRevLett.126.150502,
  title = {{Quantum Thermodynamics of Correlated-Catalytic State Conversion at Small Scale}},
  author = {Shiraishi, Naoto and Sagawa, Takahiro},
  journal = {Physical Review Letters},
  volume = {126},
  issue = {15},
  pages = {150502},
  numpages = {6},
  year = {2021},
  month = {Apr},
  publisher = {American Physical Society},
  doi = {10.1103/PhysRevLett.126.150502},
  url = {https://link.aps.org/doi/10.1103/PhysRevLett.126.150502}
}

@article{PhysRevLett.127.150503,
  title = {{Catalytic Transformations of Pure Entangled States}},
  author = {Kondra, Tulja Varun and Datta, Chandan and Streltsov, Alexander},
  journal = {Physical Review Letters},
  volume = {127},
  issue = {15},
  pages = {150503},
  numpages = {6},
  year = {2021},
  month = {Oct},
  publisher = {American Physical Society},
  doi = {10.1103/PhysRevLett.127.150503},
  url = {https://link.aps.org/doi/10.1103/PhysRevLett.127.150503}
}

@article{PhysRevLett.127.080502,
  title = {{Catalytic Quantum Teleportation}},
  author = {Lipka-Bartosik, Patryk and Skrzypczyk, Paul},
  journal = {Physical Review Letters},
  volume = {127},
  issue = {8},
  pages = {080502},
  numpages = {7},
  year = {2021},
  month = {Aug},
  publisher = {American Physical Society},
  doi = {10.1103/PhysRevLett.127.080502},
  url = {https://link.aps.org/doi/10.1103/PhysRevLett.127.080502}
}

@article{PhysRevLett.133.250201,
  title = {{Catalytic and Asymptotic Equivalence for Quantum Entanglement}},
  author = {Ganardi, Ray and Kondra, Tulja Varun and Streltsov, Alexander},
  journal = {Physical Review Letters},
  volume = {133},
  issue = {25},
  pages = {250201},
  numpages = {7},
  year = {2024},
  month = {Dec},
  publisher = {American Physical Society},
  doi = {10.1103/PhysRevLett.133.250201},
  url = {https://link.aps.org/doi/10.1103/PhysRevLett.133.250201}
}

@article{PhysRevLett.129.120506,
  title = {{Fundamental Limits on Correlated Catalytic State Transformations}},
  author = {Rubboli, Roberto and Tomamichel, Marco},
  journal = {Physical Review Letters},
  volume = {129},
  issue = {12},
  pages = {120506},
  numpages = {7},
  year = {2022},
  month = {Sep},
  publisher = {American Physical Society},
  doi = {10.1103/PhysRevLett.129.120506},
  url = {https://link.aps.org/doi/10.1103/PhysRevLett.129.120506}
}

@article{PhysRevLett.132.200201,
  title = {{Coherence Manipulation in Asymmetry and Thermodynamics}},
  author = {Kondra, Tulja Varun and Ganardi, Ray and Streltsov, Alexander},
  journal = {Physical Review Letters},
  volume = {132},
  issue = {20},
  pages = {200201},
  numpages = {6},
  year = {2024},
  month = {May},
  publisher = {American Physical Society},
  doi = {10.1103/PhysRevLett.132.200201},
  url = {https://link.aps.org/doi/10.1103/PhysRevLett.132.200201}
}

@article{PhysRevLett.128.240501,
  title = {{Correlation in Catalysts Enables Arbitrary Manipulation of Quantum Coherence}},
  author = {Takagi, Ryuji and Shiraishi, Naoto},
  journal = {Physical Review Letters},
  volume = {128},
  issue = {24},
  pages = {240501},
  numpages = {7},
  year = {2022},
  month = {Jun},
  publisher = {American Physical Society},
  doi = {10.1103/PhysRevLett.128.240501},
  url = {https://link.aps.org/doi/10.1103/PhysRevLett.128.240501}
}

@article{PhysRevA.79.054302,
  title = {Necessary conditions for entanglement catalysts},
  author = {Sanders, Yuval Rishu and Gour, Gilad},
  journal = {Physical Review A},
  volume = {79},
  issue = {5},
  pages = {054302},
  numpages = {4},
  year = {2009},
  month = {May},
  publisher = {American Physical Society},
  doi = {10.1103/PhysRevA.79.054302},
  url = {https://link.aps.org/doi/10.1103/PhysRevA.79.054302}
}

@article{Sun_2005,
  title={The existence of quantum entanglement catalysts},
  author={Sun, Xiaoming and Duan, Runyao and Ying, Mingsheng},
  journal={IEEE Transactions on Information Theory},
  volume={51}, number={1}, pages={75--80}, year={2005},
  doi={10.1109/TIT.2004.839477}
}

@article{super,
  title = {Circuit quantum electrodynamics},
  author = {Blais, Alexandre and Grimsmo, Arne L. and Girvin, S. M. and Wallraff, Andreas},
  journal = {Reviews of Modern Physics},
  volume = {93},
  issue = {2},
  pages = {025005},
  numpages = {72},
  year = {2021},
  month = {May},
  publisher = {American Physical Society},
  doi = {10.1103/RevModPhys.93.025005},
  url = {https://link.aps.org/doi/10.1103/RevModPhys.93.025005}
}

@article{trap,
author = "Foss-Feig, Michael and Pagano, Guido and Potter, Andrew C. and Yao, Norman Y.",
   title = "Progress in Trapped-Ion Quantum Simulation", 
   journal= "Annual Review of Condensed Matter Physics",
   year = "2025",
   volume = "16",
   number = "Volume 16, 2025",
   pages = "145-172",
   doi = "https://doi.org/10.1146/annurev-conmatphys-032822-045619",
   url = "https://www.annualreviews.org/content/journals/10.1146/annurev-conmatphys-032822-045619",
   publisher = "Annual Reviews",
   issn = "1947-5462"
}

@article{nmr,
  title = {{NMR techniques for quantum control and computation}},
  author = {Vandersypen, L. M. K. and Chuang, I. L.},
  journal = {Reviews of Modern Physics},
  volume = {76},
  issue = {4},
  pages = {1037--1069},
  numpages = {0},
  year = {2005},
  month = {Jan},
  publisher = {American Physical Society},
  doi = {10.1103/RevModPhys.76.1037},
  url = {https://link.aps.org/doi/10.1103/RevModPhys.76.1037}
}

@article{CatThermalOps,
  title = {Catalytic transformations for thermal operations},
  author = {Czartowski, Jakub and de Oliveira Junior, A.},
  journal = {Physical Review Research},
  volume = {6},
  issue = {3},
  pages = {033203},
  numpages = {17},
  year = {2024},
  month = {Aug},
  publisher = {American Physical Society},
  doi = {10.1103/PhysRevResearch.6.033203},
  url = {https://link.aps.org/doi/10.1103/PhysRevResearch.6.033203}
}

\newpage
\appendix
\begin{center}
    \large \textbf{Supplementary Materials}
\end{center}
\setcounter{equation}{0}
\setcounter{page}{1}

\subsection{Flexible catalysis offers no advantage without dimension restrictions}\label{app:unbounded}

We work in the unified setting of relative majorization~\cite{Renes_2016}, a generalization of majorization that governs transformations of probability distributions relative to an arbitrary reference and encompasses both entanglement and thermodynamics. Given a state $\vec{p}$ with reference $\vec{r}$, $\vec{p}$ relatively majorizes $\vec{q}$ (denoted $\vec{p} \succ_{\vec{r}} \vec{q}$) if there exists a stochastic matrix $M$ with $M\vec{p} = \vec{q}$ and $M\vec{r} = \vec{r}$. Thermo-majorization~\cite{Horodecki_2013} is the special case in which $\vec{r}$ is the Gibbs state $\vec{\gamma}$ fixed by the system Hamiltonian and inverse temperature $\beta$; ordinary majorization (entanglement) is the special case of a maximally mixed reference ($H \propto \openone$, or $\beta \to 0$). A reader concerned only with entanglement may take all references uniform throughout this subsection, so that $\succ$ is ordinary majorization and the result below is exactly the claim used in the main text~\cite{Duan_2005,PhysRevLett.126.150502}; we keep a general reference only so that the same proof also covers thermo-majorization, without repeating it.

Once the catalyst dimension is left unrestricted, a flexible cycle can always be replaced by a single standard catalyst.

\begin{proposition}\label{prop:unbounded}
Let $\vec{p}, \vec{q} \in \Delta_d$ be states with reference $\vec{r}$, and let $\{\vec{c}_i\}_{i=1}^n \subset \Delta_k$ be a flexible cycle of catalysts with reference $\vec{s}$, i.e.,
\begin{equation}
    \vec{p} \otimes \vec{c}_i \;\succ_{\vec{r} \otimes \vec{s}}\; \vec{q} \otimes \vec{c}_{i+1}, \qquad \vec{c}_{n+1} = \vec{c}_1 .
\end{equation}
Then the $nk$-dimensional effective catalyst $\vec{C} = \frac{1}{n}\bigoplus_{i=1}^n \vec{c}_i$, with reference $\vec{S} = \frac{1}{n}\bigoplus_{i=1}^n \vec{s}$, serves as a single standard catalyst for the same transformation:
\begin{equation}
    \vec{p} \otimes \vec{C} \;\succ_{\vec{r} \otimes \vec{S}}\; \vec{q} \otimes \vec{C} .
\end{equation}
\end{proposition}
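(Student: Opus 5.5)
The plan is to build an explicit stochastic matrix witnessing the relative majorization, assembled block by block from the matrices given by the flexible cycle. By the definition of relative majorization recalled above, for each $i$ there is a column-stochastic matrix $M_i$ acting on $\mathbb{R}^{d}\otimes\mathbb{R}^{k}$ with
\begin{equation}
M_i(\vec{p}\otimes\vec{c}_i)=\vec{q}\otimes\vec{c}_{i+1},\qquad M_i(\vec{r}\otimes\vec{s})=\vec{r}\otimes\vec{s}.
\end{equation}
The key structural observation is that tensoring with a direct sum distributes:
\begin{equation}
\vec{p}\otimes\vec{C}=\tfrac{1}{n}\bigoplus_{i=1}^n \vec{p}\otimes\vec{c}_i,\qquad \vec{r}\otimes\vec{S}=\tfrac1n\bigoplus_{i=1}^n\vec{r}\otimes\vec{s}.
\end{equation}
This holds up to a fixed reordering of coordinates, namely the permutation identifying $\mathbb{R}^d\otimes(\bigoplus_i\mathbb{R}^k)$ with $\bigoplus_i(\mathbb{R}^d\otimes\mathbb{R}^k)$. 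That permutation is applied identically to every vector involved, so it commutes with everything and can be ignored.

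The construction proceeds in two steps. First, form the block-diagonal matrix $M=\bigoplus_{i=1}^n M_i$. It is stochastic, and it maps block $i$ of $\vec{p}\otimes\vec{C}$ to $\frac1n\,\vec{q}\otimes\vec{c}_{i+1}$, while fixing each block of the reference. Second, compose with the cyclic block permutation $P$ that sends block $i+1$ to position $i+1$ in the ordering we want. Concretely, $M(\vec{p}\otimes\vec{C})=\frac1n\bigoplus_i \vec{q}\otimes\vec{c}_{i+1}$, and this is exactly a cyclic relabelling of the blocks of $\frac1n\bigoplus_j\vec{q}\otimes\vec{c}_j=\vec{q}\otimes\vec{C}$, using $\vec{c}_{n+1}=\vec{c}_1$. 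So let $P$ be the permutation moving block $i$ to block $i+1$ (indices modulo $n$). Then $PM(\vec{p}\otimes\vec{C})=\vec{q}\otimes\vec{C}$.

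It remains to check that the reference is preserved. Every block of $\vec{r}\otimes\vec{S}$ equals the same vector $\frac1n\vec{r}\otimes\vec{s}$, so it is invariant under any block permutation, and also under $M$. Hence $PM$ is stochastic, maps $\vec{p}\otimes\vec{C}$ to $\vec{q}\otimes\vec{C}$, and fixes $\vec{r}\otimes\vec{S}$. This is the required relation $\vec{p}\otimes\vec{C}\succ_{\vec{r}\otimes\vec{S}}\vec{q}\otimes\vec{C}$.

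The only genuinely delicate point is the reference. The cyclic shift is allowed precisely because the catalyst reference $\vec{S}$ is a uniform mixture of identical copies of $\vec{s}$, which is why $\vec{S}$ is chosen this way. I would also note explicitly the index bookkeeping of the coordinate reordering, so that the tensor-with-direct-sum identification is applied consistently to all four vectors. In the uniform (entanglement) case this reduces to the familiar fact that $\frac1n\bigoplus_i$ commutes with majorization and is invariant under permuting summands.
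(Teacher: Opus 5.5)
Your proof is correct and follows essentially the same route as the paper: a block-diagonal $\bigoplus_i M_i$ composed with the cyclic block permutation, which fixes the reference because every block of $\vec{r}\otimes\vec{S}$ is the same copy of $\frac{1}{n}\vec{r}\otimes\vec{s}$. You also treat explicitly the coordinate reordering between $\vec{p}\otimes\bigl(\bigoplus_i\vec{c}_i\bigr)$ and $\bigoplus_i(\vec{p}\otimes\vec{c}_i)$, which amounts to conjugating by one fixed permutation applied to states and reference alike, and you keep the normalized factor $\frac{1}{n}$ rather than ``factoring out $n$''; both points are handled slightly more carefully than in the paper, which passes over them.
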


\begin{proof}
Relative majorization is preserved under direct sums: if $\vec{p}_i \succ_{\vec{r}_i} \vec{q}_i$ via stochastic matrices $M_i$, then the block-diagonal matrix $M = \bigoplus_i M_i$ is also stochastic and satisfies $M\big(\bigoplus_i \vec{p}_i\big) = \bigoplus_i \vec{q}_i$ and $M\big(\bigoplus_i \vec{r}_i\big) = \bigoplus_i \vec{r}_i$. Thus, $\bigoplus_i \vec{p}_i \succ_{\bigoplus_i \vec{r}_i} \bigoplus_i \vec{q}_i$.

Writing $n\vec{C} = \bigoplus_{i=1}^n \vec{c}_i$ and utilizing the distributive property of the tensor product over the direct sum, we obtain:
\begin{equation}
    \vec{p} \otimes (n\vec{C}) = \bigoplus_{i=1}^n (\vec{p} \otimes \vec{c}_i), \qquad
    \vec{q} \otimes (n\vec{C}) = \bigoplus_{i=1}^n (\vec{q} \otimes \vec{c}_{i+1}).
\end{equation}
Notice that the second sum is simply cyclically reordered. The common global reference for these direct sums is $\bigoplus_{i=1}^n (\vec{r} \otimes \vec{s}) = \vec{r} \otimes (n\vec{S})$. Since all $n$ reference blocks are identical to $\vec{r} \otimes \vec{s}$, the cyclic reordering is equivalent to applying a permutation matrix $\Pi$ that leaves the global reference invariant. Composing this permutation with the direct-sum stochastic matrix yields a valid transformation matrix witnessing $\vec{p} \otimes (n\vec{C}) \succ_{\vec{r} \otimes (n\vec{S})} \vec{q} \otimes (n\vec{C})$. Factoring out $n$ completes the proof, giving $\vec{p} \otimes \vec{C} \succ_{\vec{r} \otimes \vec{S}} \vec{q} \otimes \vec{C}$.
\end{proof}

\begin{remark}
For uniform references, the maps $M_i$ are doubly stochastic and $\vec{C} = \frac{1}{n}\bigoplus_i \vec{c}_i$ is a valid standard catalyst, recovering the entanglement statement~\cite{Duan_2005}. For the Gibbs references $\vec{r} = \vec{\gamma}_S$ and $\vec{s} = \vec{\gamma}_C$, the same $\vec{C}$ is a valid standard catalyst under thermo-majorization along with reference $\vec{S} = \frac{1}{n}\bigoplus_i \vec{\gamma}_C$, the Gibbs state of the catalyst enlarged by a degenerate $n$-level register~\cite{PhysRevLett.126.150502}.

In either scenario, the effective catalyst $\vec{C}$ has dimension $nk$, which grows with the cycle length $n$. The reduction of a flexible cycle to a standard catalyst therefore fundamentally relies on unbounded dimensionality; at a \emph{fixed} dimension $k$ it generally breaks down. This dimensional obstruction is exactly why flexible catalysis can unlock strictly advantageous transformations in the finite-dimensional regime, establishing the deterministic advantages observed for both majorization (Theorem~\ref{thm:det_advantage}) and thermo-majorization (Theorem~\ref{thm:thermo_advantage}).
\end{remark}

\subsection{Further instances of the deterministic advantage}\label{app:instances}

To demonstrate that the deterministic advantage of Theorem~\ref{thm:det_advantage} is not an isolated occurrence, Table~\ref{tab:instances} lists ten further transformations $\vec{x} \to \vec{y}$ (for $d \in \{4, 5, 6\}$ and $k = 3$) exclusively unlocked by two-step flexible catalysis. For each instance, the flexible cycle strictly realizes the transformation ($\vec{x} \otimes \vec{c}_1 \prec \vec{y} \otimes \vec{c}_2$ and $\vec{x} \otimes \vec{c}_2 \prec \vec{y} \otimes \vec{c}_1$). Crucially, the complete absence of any standard catalyst of the same dimension is certified using the algorithm of Ref.~\cite{Sun_2005}.

\begin{table*}[t]
\centering
\caption{Ten transformations realizing the deterministic advantage of Theorem~\ref{thm:det_advantage}, with catalyst dimension $k=3$ and system dimensions $d=4,5,6$. Each transformation $\vec x\to\vec y$ is realized by the flexible cycle $(\vec c_1,\vec c_2)$, while the algorithm of Ref.~\cite{Sun_2005} certifies that no standard $k=3$ catalyst exists. In every case neither $\vec c_1$ nor $\vec c_2$ is by itself a standard catalyst. The first row is the example of the main text. All quantities are exact rationals.}
\label{tab:instances}
\renewcommand{\arraystretch}{1.35}
\begin{ruledtabular}
\color{black}
\footnotesize
\begin{tabular}{c c l l l l}
\# & $d$ & $\vec x$ & $\vec y$ & $\vec c_1$ & $\vec c_2$ \\ \hline
1 & 4 & $\tfrac{1}{1500}(645,500,208,147)$ & $\tfrac{1}{1500}(730,376,272,122)$ & $\tfrac{1}{200}(89,71,40)$ & $\tfrac{1}{200}(99,54,47)$ \\
2 & 4 & $\tfrac{1}{2000}(986,627,204,183)$ & $\tfrac{1}{2000}(1091,457,330,122)$ & $\tfrac{1}{300}(146,93,61)$ & $\tfrac{1}{300}(153,80,67)$ \\
3 & 5 & $\tfrac{1}{1500}(597,451,284,134,34)$ & $\tfrac{1}{1500}(721,313,291,147,28)$ & $\tfrac{1}{75}(39,26,10)$ & $\tfrac{1}{75}(44,19,12)$ \\
4 & 5 & $\tfrac{1}{2000}(577,462,380,305,276)$ & $\tfrac{1}{2000}(619,407,383,331,260)$ & $\tfrac{1}{150}(59,49,42)$ & $\tfrac{1}{150}(57,52,41)$ \\
5 & 5 & $\tfrac{1}{2000}(635,520,397,252,196)$ & $\tfrac{1}{2000}(675,479,357,353,136)$ & $\tfrac{1}{150}(65,43,42)$ & $\tfrac{1}{150}(63,49,38)$ \\
6 & 6 & $\tfrac{1}{3000}(885,857,608,302,202,146)$ & $\tfrac{1}{3000}(1124,811,363,353,219,130)$ & $\tfrac{1}{50}(24,15,11)$ & $\tfrac{1}{50}(21,19,10)$ \\
7 & 6 & $\tfrac{1}{2000}(624,565,430,207,105,69)$ & $\tfrac{1}{2000}(961,308,279,277,121,54)$ & $\tfrac{1}{75}(36,29,10)$ & $\tfrac{1}{75}(43,20,12)$ \\
8 & 6 & $\tfrac{1}{1500}(751,429,156,70,52,42)$ & $\tfrac{1}{1500}(881,285,143,98,87,6)$ & $\tfrac{1}{150}(100,28,22)$ & $\tfrac{1}{150}(93,41,16)$ \\
9 & 6 & $\tfrac{1}{1500}(496,435,258,210,70,31)$ & $\tfrac{1}{1500}(693,281,241,141,123,21)$ & $\tfrac{1}{200}(105,67,28)$ & $\tfrac{1}{200}(119,49,32)$ \\
10 & 6 & $\tfrac{1}{5000}(1453,1316,1069,530,326,306)$ & $\tfrac{1}{5000}(2185,811,728,604,437,235)$ & $\tfrac{1}{300}(136,105,59)$ & $\tfrac{1}{300}(142,97,61)$ \\
\end{tabular}
\end{ruledtabular}
\end{table*}

\subsection{Certifying the non-existence of a standard catalyst}\label{app:sdy}

To certify that the transformations of Theorem~\ref{thm:det_advantage} and
Table~\ref{tab:instances} admit \emph{no} standard catalyst of the stated dimension, we
use the decision procedure of Sun, Duan, and Ying~\cite{Sun_2005}, specifically their
general $k\times k$ result~\cite[Theorem~3.1]{Sun_2005}. We summarize here exactly which part of that work we invoke and how.

The problem is to decide, for incomparable $\vec{x},\vec{y}\in\Delta_d^\downarrow$ and a
fixed catalyst dimension $k$, whether there exists $\vec{c}=(c_1,\dots,c_k)\in\Delta_k$
with $\vec{x}\otimes\vec{c}\prec\vec{y}\otimes\vec{c}$. By Nielsen's criterion this requires
the partial sums of the sorted entries of $\vec{x}\otimes\vec{c}$ and $\vec{y}\otimes\vec{c}$
to satisfy $\sum_{i=1}^{l}(\vec{x}\otimes\vec{c})^{\downarrow}_i \le
\sum_{i=1}^{l}(\vec{y}\otimes\vec{c})^{\downarrow}_i$ for all $l=1,\dots,dk$. The obstacle,
as noted in~\cite{Sun_2005}, is that the sorting order of the products
$\{x_a c_b\}$ and $\{y_a c_b\}$ depends on $\vec{c}$.

The key observation of Ref.~\cite{Sun_2005} is that this order is piecewise constant: it can
change only when $\vec{c}$ crosses one of the hyperplanes
\begin{equation}
\Gamma=\bigl\{\,x_{a_1}c_{b_1}=x_{a_2}c_{b_2}\,\bigr\}\cup
       \bigl\{\,y_{a_1}c_{b_1}=y_{a_2}c_{b_2}\,\bigr\},\qquad a_1<a_2,\ b_1>b_2,
\end{equation}
of which there are only $O(d^{2})$. Intersected with the simplex $\Delta_k$, these
hyperplanes partition the parameter space into finitely many cells; within each cell the
sorted orders of $\vec{x}\otimes\vec{c}$ and $\vec{y}\otimes\vec{c}$ are fixed, so each
partial sum is a fixed linear functional of $\vec{c}$ and the $dk$ majorization inequalities,
together with the linear inequalities defining the cell and $\sum_b c_b=1$, form a linear
feasibility problem. A standard catalyst exists if and only if at least one cell is feasible;
equivalently, no standard catalyst exists if and only if every cell is infeasible. We use this algorithm to determine the existence of a catalyst of a fixed dimension to enable the transformation between $\vec{x}$ and $\vec{y}$. Note that this procedure described in Theorem~3.1 of~\cite{Sun_2005} is constructive
and returns all catalysts when they exist. 

We apply this procedure at $d\in{4,5,6}$ and $k=3$. Since majorization is invariant under permutations of the entries of $\vec{c}$, it suffices to test the ordered region $c_1\ge c_2\ge c_3$, which reduces the number of cells without affecting the verdict.

\subsection{A standard catalyst for the flexible example of Ref.~\cite{weisz2025flexiblecatalysis}}\label{app:weisz}

Theorem~5.4 of Ref.~\cite{weisz2025flexiblecatalysis} realizes the LOCC transformation $\vec{a}\to\vec{b}$, with $\vec{a}=(0.4,0.4,0.1,0.1)$ and $\vec{b}=(0.5,0.29,0.21,0)$, via the flexible set $\{\vec{a},\vec{b}\}$: although $\vec{a}\not\prec\vec{b}$, one has $\vec{a}\otimes\vec{a}\prec\vec{b}\otimes\vec{b}$ and $\vec{a}\otimes\vec{b}\prec\vec{b}\otimes\vec{a}$, while neither $\vec{a}$ nor $\vec{b}$ is a standard catalyst on its own. We note that this same transformation is, however, catalyzed by an ordinary standard catalyst $\vec{c}=(3/5,2/5)$ of dimension only two. 

Thus the transformation of Theorem~5.4 in Ref.~\cite{weisz2025flexiblecatalysis} is itself achievable by ordinary standard catalysis---indeed by a catalyst of dimension ($k=2$) smaller than its flexible members ($k=4$)---and therefore exhibits no advantage of flexible over standard catalysis \emph{at fixed dimension}. Our deterministic examples go beyond it: as certified in Supplementary Material~\ref{app:sdy}, they admit no standard catalyst of the same dimension, thereby proving such an advantage.

\subsection{Properties of Flexible Catalysis in Majorization}\label{app:necessary_conditions}

In this section, we derive necessary conditions for the existence of flexible catalysts in standard majorization and prove other structural properties.

\begin{lemma}\label{lem:boundary_conditions}
If $\vec{x}\in \Delta_d^\downarrow$ is majorized by $\vec{y}\in \Delta_d^\downarrow$ via flexible catalysts $\{\vec{C}_i\}_{i=1}^n \subset \Delta_k^\downarrow$, and $x_d, y_d > 0$, the components of flexible catalysts satisfy: 
\begin{align}
    C_{i,1} &\leqslant \frac{y_1}{x_1} C_{i+1,1}, \label{eq:boundary_max} \\
    C_{i,k} &\geqslant \frac{y_{d}}{x_{d}} C_{i+1,k}. \label{eq:boundary_min}
\end{align}
\end{lemma}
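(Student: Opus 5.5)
The plan is to read off the two inequalities from the extreme Nielsen conditions applied to each step $\vec{x} \otimes \vec{C}_i \prec \vec{y} \otimes \vec{C}_{i+1}$ of the flexible cycle. This is the argument already used in the proof of Theorem~\ref{thm:low_dim_degeneracy}(ii), now without assuming $x_1 = y_1$ and $x_d = y_d$.

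\textbf{Largest component.} Since $\vec{x}$ and $\vec{C}_i$ are sorted in non-increasing order, the largest entry of $\vec{x}\otimes\vec{C}_i$ is $x_1 C_{i,1}$, and the largest entry of $\vec{y}\otimes\vec{C}_{i+1}$ is $y_1 C_{i+1,1}$. The $l=1$ partial-sum condition of majorization gives $x_1 C_{i,1} \le y_1 C_{i+1,1}$. Because $x_1 \ge 1/d > 0$, we may divide by $x_1$, which yields Eq.~\eqref{eq:boundary_max}.

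\textbf{Smallest component.} We use the complementary form of majorization. Equality of total sums, together with the partial-sum inequality at $l = dk-1$, gives
\begin{equation}
\min_j (\vec{x}\otimes\vec{C}_i)_j \ge \min_j (\vec{y}\otimes\vec{C}_{i+1})_j .
\end{equation}
By sortedness, these minima are $x_d C_{i,k}$ and $y_d C_{i+1,k}$ respectively. Hence $x_d C_{i,k} \ge y_d C_{i+1,k}$. Since $x_d > 0$ by hypothesis, dividing by $x_d$ gives Eq.~\eqref{eq:boundary_min}.

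\textbf{Where care is needed.} The only delicate points are justifying that the extreme entries of the tensor products are the products of the extreme entries, and the division step. The first follows from nonnegativity and ordering of all vectors, since $\max_{a,b} x_a c_b = (\max_a x_a)(\max_b c_b)$ and similarly for the minimum. The second uses $x_1>0$ and the hypothesis $x_d>0$. The assumption $y_d>0$ is not needed for these inequalities; it only ensures the bound in Eq.~\eqref{eq:boundary_min} is nontrivial. Indices are taken cyclically, with $\vec{C}_{n+1}=\vec{C}_1$, so the inequalities hold for all $i=1,\dots,n$.
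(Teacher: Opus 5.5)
Your proposal is correct and follows the paper's proof: both read off $x_1 C_{i,1} \le y_1 C_{i+1,1}$ and $x_d C_{i,k} \ge y_d C_{i+1,k}$ from the extreme partial-sum conditions of $\vec{x}\otimes\vec{C}_i \prec \vec{y}\otimes\vec{C}_{i+1}$, then divide by $x_1$ and $x_d$. You also make explicit two points the paper leaves implicit: the smallest-entry inequality follows from the $l=dk-1$ condition together with equal totals, and only $x_d>0$ (not $y_d>0$) is needed for the division.
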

\begin{proof}
Since $\vec{x}\in \Delta_d^\downarrow$ is majorized by $\vec{y}\in \Delta_d^\downarrow$ via flexible catalysts $\{\vec{C}_i\}_{i=1}^n \subset \Delta_k^\downarrow$,
this means $\vec{x} \otimes \vec{C}_i \prec \vec{y} \otimes \vec{C}_{i+1}$. This imposes constraints on the largest and smallest components: $$ x_1 C_{i,1} \leqslant y_1 C_{i+1,1} \quad \text{and} \quad x_d C_{i,k} \geqslant y_d C_{i+1,k}. $$
Since $x_d, y_d > 0$, we can rearrange these terms to obtain the stated inequalities~\eqref{eq:boundary_max} and~\eqref{eq:boundary_min}.
\end{proof}

This necessary condition can give us interesting properties of flexible catalysts. In standard catalysis, the catalyst returns to its exact initial state, naturally preserving the number of non-zero components in the catalyst vector. While flexible catalysis allows the states to change, Lemma~\ref{lem:boundary_conditions} enforces a similar restriction.

\begin{theorem}\label{thm:same_number_of_non-zero_elements_for_flexible_catalysis}
Let $\vec{x},\vec{y}\in \Delta_d^\downarrow$ be such that $x_d, y_d > 0$ and are majorized via flexible catalysts $\{\vec{C}_i\}_{i=1}^n \subset \Delta_k^\downarrow$. If for some $i$, $C_{i,k}=0$ (the $k$-th component of $\vec{C}_i$ is equal to zero), then:
\begin{equation}
    C_{i,k} = 0 \quad\forall\,  i.
\end{equation}
Consequently, all flexible catalysts in the sequence must share the same number of non-zero components.
\end{theorem}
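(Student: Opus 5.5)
The proof will use Lemma~\ref{lem:boundary_conditions}, specifically inequality~\eqref{eq:boundary_min}, propagated around the cycle. Since $x_d, y_d > 0$, the lemma gives, for every $i$ (indices taken modulo $n$ via $\vec{C}_{n+1} = \vec{C}_1$),
\begin{equation*}
    C_{i+1,k} \le \frac{x_d}{y_d}\, C_{i,k}.
\end{equation*}
This says that the smallest component of the next catalyst is controlled by a positive multiple of the smallest component of the current one.

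Now suppose $C_{j,k} = 0$ for some $j$. The inequality gives $C_{j+1,k} \le 0$. Nonnegativity of probability vectors then forces $C_{j+1,k} = 0$. Applying this step $n-1$ times, stepping forward cyclically, yields $C_{i,k} = 0$ for every $i$. The cyclic condition matters here: without it, only the catalysts after index $j$ would be covered, but since the sequence wraps around, every index is reached from $j$.

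For the consequence about the number of non-zero components, I would iterate this argument on the support. Let $m_i$ be the number of non-zero entries of $\vec{C}_i$, which are ordered non-increasingly. Each $\vec{C}_i$ can be viewed as an $m_i$-dimensional vector padded with zeros, and majorization is unaffected by appending zeros. The key observation is that the smallest non-zero component is also constrained by $\vec{x}\otimes\vec{C}_i \prec \vec{y}\otimes\vec{C}_{i+1}$, because majorization implies the supports are nested in the right direction: the number of non-zero entries of $\vec{x}\otimes\vec{C}_i$ is at least that of $\vec{y}\otimes\vec{C}_{i+1}$. Indeed, if $\vec{a}\prec\vec{b}$, then $\vec{b}$ has at most as many non-zero entries as $\vec{a}$, since the top partial sums of $\vec{b}$ already reach $1$ no later than those of $\vec{a}$. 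With $x_d, y_d > 0$, this gives $d\,m_i \ge d\,m_{i+1}$, so $m_i \ge m_{i+1}$ for all $i$. Going around the cycle forces all $m_i$ to be equal. This support-counting argument in fact subsumes the first claim, since $C_{i,k}=0$ means $m_i<k$. I would nonetheless present the lemma-based argument for the first claim and the counting argument for the consequence.

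The main obstacle is the consequence clause, because Lemma~\ref{lem:boundary_conditions} applied naively only addresses index $k$, not the last non-zero index. The support-size inequality handles this cleanly. The only care needed there is justifying that majorization cannot increase support size, which follows from the equality of total sums together with the partial-sum inequalities.
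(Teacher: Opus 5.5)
Your proof is correct. The first claim is argued exactly as in the paper: rewrite \eqref{eq:boundary_min} as $C_{i+1,k}\le (x_d/y_d)\,C_{i,k}$ and push a zero forward around the cycle.

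For the ``consequently'' clause you take a genuinely different route.

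\emph{The paper's route.} It proceeds by truncation. Once $C_{i,k}=0$ for all $i$, it drops the last entry and treats the $(k-1)$-dimensional vectors as a new flexible cycle. This implicitly uses that deleting the $d$ common zero entries from both sides of each relation $\vec x\otimes\vec C_i\prec\vec y\otimes\vec C_{i+1}$ preserves majorization. It then reapplies the same lemma-based step until every last entry is strictly positive.

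\emph{Your route.} You use the one-shot fact that $\vec a\prec\vec b$ implies $|\mathrm{supp}\,\vec b|\le|\mathrm{supp}\,\vec a|$. This holds because the first $|\mathrm{supp}\,\vec a|$ partial sums of $\vec b^{\downarrow}$ are already $\ge 1$. Since $x_d,y_d>0$, the supports of $\vec x\otimes\vec C_i$ and $\vec y\otimes\vec C_{i+1}$ have sizes $d\,m_i$ and $d\,m_{i+1}$. This gives $m_{i+1}\le m_i$, and hence equality around the cycle.

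\emph{Comparison.} Your argument is shorter and avoids the implicit truncation and induction. It needs neither Lemma~\ref{lem:boundary_conditions} nor the non-increasing ordering of the $\vec C_i$; the lemma relies on $x_dC_{i,k}$ being the smallest entry. As you note, it already contains the first claim. The paper's route stays within a single quantitative inequality, $C_{i,k}\ge (y_d/x_d)\,C_{i+1,k}$. That is exactly what it reuses at the last non-zero index in the subsequent necessary-condition theorem. Either way, what the later results need is that all $\vec C_i$ have the same support size, so that after truncation all last entries are strictly positive.
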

\begin{proof}
We utilize the condition derived in Lemma~\ref{lem:boundary_conditions}. According to Eq.~\eqref{eq:boundary_min}, we have:
\begin{equation}
    C_{i,k} \geqslant \frac{y_d}{x_d} C_{i+1,k}.
\end{equation}
If we assume $C_{j,k} = 0$ for some $j<n$, the inequality becomes $0 \geqslant y_d/x_d\cdot C_{j+1,k}$. Since $x_d, y_d > 0$, $y_d/x_d>0$, this implies $C_{j+1,k} = 0$, since the components of probability vectors are non-negative. Using this same argument repeatedly, $C_{i,k}=0$ for all $j\leqslant i < n$. By applying the cyclic condition $\vec{C}_{n+1} = \vec{C}_1$, we conclude that if the $k$-th entry vanishes for some $\vec{C}_{i}$, it must vanish for all of them.

If $C_{i,k} = 0$ for all $i$, we can truncate the vectors to dimension $k-1$. We now consider the truncated vectors $\{\vec{C'}_i\}_{i=1}^n \subset \Delta_{k-1}^\downarrow$. We repeat the above procedure until the smallest component of every flexible catalyst in the sequence is strictly positive. Consequently, all flexible catalysts in the sequence must share the same number of non-zero elements.
\end{proof}

Since all flexible catalysts in the sequence must share the same number of non-zero elements, we hereafter denote this number by $k$.

Combining Lemma \ref{lem:boundary_conditions} and Theorem \ref{thm:same_number_of_non-zero_elements_for_flexible_catalysis}, we can identify a necessary condition for state transformations via flexible catalysis.

\begin{theorem}\label{lem:necessary_conditions}
If $\vec{x}\in \Delta_d^\downarrow$ is majorized by $\vec{y}\in \Delta_d^\downarrow$ via flexible catalysts $\{\vec{C}_i\}_{i=1}^n \subset \Delta_k^\downarrow$, and $x_d, y_d > 0$,  the components of $\vec{x}$ and $\vec{y}$ satisfy:
\begin{equation}
    x_1 \leqslant y_1 \quad \text{and} \quad x_d \geqslant y_d.
\end{equation}
\end{theorem}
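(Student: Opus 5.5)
Chain the inequalities of Lemma~\ref{lem:boundary_conditions} once around the cycle. We may first invoke Theorem~\ref{thm:same_number_of_non-zero_elements_for_flexible_catalysis} to truncate all catalysts to a common support. Then, without loss of generality, every $\vec{C}_i$ has strictly positive components; in particular $C_{i,1}>0$ and $C_{i,k}>0$ for all $i$. Note that $C_{i,1}>0$ holds trivially in any case, since $C_{i,1}\ge 1/k$ for a sorted probability vector.

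For the largest component, Eq.~\eqref{eq:boundary_max} gives $C_{i,1}\le \frac{y_1}{x_1}C_{i+1,1}$ for $i=1,\dots,n$; here $x_1\ge 1/d>0$, so the ratio is well defined. Multiplying these $n$ inequalities, all of whose terms are positive, and using $\vec{C}_{n+1}=\vec{C}_1$, the product $\prod_i C_{i,1}$ appears on both sides and cancels. This leaves $1\le (y_1/x_1)^n$, hence $x_1\le y_1$. Alternatively, one can avoid products. Pick $j$ maximizing $C_{j,1}$ over the cycle, and apply the raw inequality $x_1C_{j-1,1}\le y_1 C_{j,1}$ with the index chosen so that the left-hand catalyst entry is the maximal one. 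Using $C_{j,1}\ge C_{j+1,1}$ then gives $x_1 C_{j,1}\le y_1 C_{j+1,1}\le y_1 C_{j,1}$, and dividing by $C_{j,1}>0$ yields $x_1\le y_1$.

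Symmetrically, Eq.~\eqref{eq:boundary_min} gives $C_{i,k}\ge \frac{y_d}{x_d}C_{i+1,k}$. Chaining around the cycle yields $\prod_i C_{i,k}\ge (y_d/x_d)^n\prod_i C_{i,k}$, and since the product is strictly positive after truncation, $(y_d/x_d)^n\le 1$, so $x_d\ge y_d$. As before, choosing the index of the minimal $C_{i,k}$ gives the same conclusion without products.

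The only delicate point is the positivity of $C_{i,k}$ needed to cancel. This is exactly what Theorem~\ref{thm:same_number_of_non-zero_elements_for_flexible_catalysis} supplies: either all $C_{i,k}$ vanish, in which case we truncate and repeat, or none do. The truncated sequence still satisfies the flexible majorization relations, since appending or removing zeros does not affect majorization. The rest is routine.
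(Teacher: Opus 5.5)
Your proposal is correct and follows the paper's own proof. Like the paper, you multiply the inequalities of Lemma~\ref{lem:boundary_conditions} once around the cycle, cancel the common product using $\vec{C}_{n+1}=\vec{C}_1$, and invoke Theorem~\ref{thm:same_number_of_non-zero_elements_for_flexible_catalysis} for the positivity of the $C_{i,k}$. Your extremal-index variant, $x_1 C_{j,1}\le y_1 C_{j+1,1}\le y_1 C_{j,1}$ at a maximizing $j$, is also valid and a bit more elementary, though your first statement of it with $C_{j-1,1}$ is garbled.
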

\begin{proof}
We analyze the inequalities by multiplying the inequalities derived in Lemma \ref{lem:boundary_conditions} over the full cycle $i=1, \dots, n$.

Using the inequality $C_{i,1} \leqslant \frac{y_1}{x_1} C_{i+1,1}$ in Eq. \eqref{eq:boundary_max} , the product over the cycle yields:
\begin{equation}
    \prod_{i=1}^n C_{i,1} \leqslant \prod_{i=1}^n \left( \frac{y_1}{x_1} C_{i+1,1} \right) = \left( \frac{y_1}{x_1} \right)^n \prod_{i=1}^n C_{i+1,1}.
\end{equation}
Due to the cyclic condition $\vec{C}_{n+1} = \vec{C}_1$, the product terms are identical: $\prod_{i=1}^n C_{i,1} = \prod_{i=1}^n C_{i+1,1}>0$, allowing us to divide by it on both sides:

\begin{equation}
    1 \leqslant \left( \frac{y_1}{x_1} \right)^n \implies x_1 \leqslant y_1.
\end{equation}

Similarly, using Eq. \eqref{eq:boundary_min} ($C_{i,k} \geqslant \frac{y_d}{x_d} C_{i+1,k}$), the product yields:
\begin{equation}
    \prod_{i=1}^n C_{i,k} \geqslant \left( \frac{y_d}{x_d} \right)^n \prod_{i=1}^n C_{i+1,k}.
\end{equation}
From Theorem \ref{thm:same_number_of_non-zero_elements_for_flexible_catalysis}, since all flexible catalysts in the sequence must share the same number of non-zero elements, and $k$ denotes this number, then we know $ \prod_{i=1}^n C_{i+1,k}= \prod_{i=1}^n C_{i,k} > 0$, allowing us to divide by it on both sides:
$$ 1 \geqslant \left( \frac{y_d}{x_d} \right)^n \implies x_d \geqslant y_d. $$
This completes the proof.
\end{proof}

These conditions in Theorem~\ref{lem:necessary_conditions} are identical to those required for standard catalysis (or even no catalysis)~\cite{Jonathan_1999}, and cannot separate flexible catalysis from standard catalysis.

Next, we consider a special case of $\vec{x}$ and $\vec{y}$ where the largest and smallest components of $\vec{x}$ and $\vec{y}$ are equal. In this case, we show that the components of the flexible catalysts are also restricted. 

\begin{corollary} \label{thm:boundary_rigidity}
If the largest and smallest components of $\vec{x}$ and $\vec{y}$ are equal (i.e., $x_1 = y_1$ and $x_d = y_d > 0$), then for any flexible catalyst with $k$ non-zero elements:
\begin{equation}
    C_{i,1} = C_{i+1,1} \quad \text{and} \quad C_{i,k} = C_{i+1,k} \quad \forall i.
\end{equation}
\end{corollary}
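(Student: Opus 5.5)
We obtain the corollary by specializing Lemma~\ref{lem:boundary_conditions} and then closing the resulting chains of inequalities around the cycle. This is the same argument already sketched in part (ii) of Theorem~\ref{thm:low_dim_degeneracy}, now made precise with the support bookkeeping of Theorem~\ref{thm:same_number_of_non-zero_elements_for_flexible_catalysis}.

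\textbf{Step 1: remove zero tails.} We invoke Theorem~\ref{thm:same_number_of_non-zero_elements_for_flexible_catalysis}. Since $x_d=y_d>0$, if some $\vec{C}_i$ has a vanishing last entry, then all of them do. We therefore truncate every catalyst to its common number $k$ of non-zero entries. This ensures $C_{i,k}>0$ for all $i$, and also $C_{i,1}>0$.

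\textbf{Step 2: specialize the lemma.} Substituting $x_1=y_1$ and $x_d=y_d$ into Eqs.~\eqref{eq:boundary_max} and~\eqref{eq:boundary_min} gives
\begin{equation}
C_{i,1}\leqslant C_{i+1,1}, \qquad C_{i,k}\geqslant C_{i+1,k}, \qquad i=1,\dots,n,
\end{equation}
with indices taken modulo $n$. Note that $x_1=y_1>0$ holds automatically, so the ratio $y_1/x_1=1$ is well defined.

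\textbf{Step 3: close the cycle.} Chaining the first family of inequalities around the cycle gives
\begin{equation}
C_{1,1}\leqslant C_{2,1}\leqslant\cdots\leqslant C_{n,1}\leqslant C_{n+1,1}=C_{1,1}.
\end{equation}
Both ends equal $C_{1,1}$, so every inequality in the chain is an equality. Hence $C_{i,1}=C_{i+1,1}$ for all $i$. The same telescoping argument applied to the reversed chain for the smallest components gives $C_{i,k}=C_{i+1,k}$ for all $i$.

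An alternative route to Step 3 is the product argument of Theorem~\ref{lem:necessary_conditions}. It shows that the product of the ratios $C_{i+1,1}/C_{i,1}$ around the cycle equals $1$ while each ratio is at least $1$, so each ratio equals $1$. The direct chain is simpler, however, and does not need the positivity from Step 1.

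\textbf{Main obstacle.} The only subtle point is making sense of ``the smallest component'' $C_{i,k}$ when some catalysts might have trailing zeros. This is exactly why Step 1 appeals to Theorem~\ref{thm:same_number_of_non-zero_elements_for_flexible_catalysis}, so that the index $k$ refers to the same position in every catalyst of the cycle. The remaining steps are routine.
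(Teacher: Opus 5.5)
Your proposal is correct and matches the paper's proof: you specialize Lemma~\ref{lem:boundary_conditions} to $x_1=y_1$, $x_d=y_d$ and close the resulting monotone chains $C_{1,1}\leqslant\cdots\leqslant C_{1,1}$ and $C_{1,k}\geqslant\cdots\geqslant C_{1,k}$ around the cycle. Your truncation step via Theorem~\ref{thm:same_number_of_non-zero_elements_for_flexible_catalysis} is harmless bookkeeping that the paper instead builds into the hypothesis ``with $k$ non-zero elements''.
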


\begin{proof}
If $x_1 = y_1$ and $x_d = y_d$, the inequalities derived in Lemma \ref{lem:boundary_conditions} reduce to the conditions:
\begin{equation}
    C_{i,1} \leqslant C_{i+1,1} \quad \text{and} \quad C_{i,k} \geqslant C_{i+1,k}. 
\end{equation}
Applying the first inequality cyclically yields the chain:
\begin{equation}
    C_{1,1} \leqslant C_{2,1} \leqslant \dots \leqslant C_{n,1} \leqslant C_{n+1,1} = C_{1,1}.
\end{equation}
The only solution to $C_{1,1} \leqslant \dots \leqslant C_{1,1}$ is strict equality $C_{i,1} = C_{i+1,1}$ for all $i$. An identical cyclic argument applies to the smallest component sequence $\{C_{i,k}\}_i$, forcing $C_{i,k} = C_{i+1,k}$. This completes the proof.
\end{proof}

Further, for the specific case of $k=3$, the restriction imposed by Corollary~\ref{thm:boundary_rigidity} is strong enough that any flexible catalysis reduces to standard catalysis. This yields the proof for Theorem~\ref{thm:low_dim_degeneracy} presented in the main text:

\begin{proof}[Proof of Theorem \ref{thm:low_dim_degeneracy}]
From Corollary \ref{thm:boundary_rigidity}, the boundary conditions imply:
\begin{equation} \label{eq:fixed_boundaries}
    C_{i,1} = C_{1,1} \quad \text{and} \quad C_{i,3} = C_{1,3} \quad \forall i.
\end{equation}
For $k=3$, the normalization condition $\sum_{j=1}^3 C_{i,j} = 1$ allows us to express the middle component $C_{i,2}$ as:
\begin{align}
    C_{i, 2} &= 1 - C_{i, 1} - C_{i, 3} \\
             &= 1 - C_{1, 1} - C_{1, 3} \quad (\text{by Eq. \eqref{eq:fixed_boundaries}}) \nonumber \\
             &= C_{1, 2}. \nonumber
\end{align}
Since every component satisfies $C_{i,j} = C_{1,j}$, then:
\begin{equation}
    \vec{C}_i = (C_{1,1}, C_{1,2}, C_{1,3}) = \vec{C}_1 \quad \forall i.
\end{equation}
This completes the proof.
\end{proof}

Now, we analyze the role of the uniform distribution vector in flexible catalysis. In standard catalysis, the uniform distribution vector cannot serve as a catalyst for any non-trivial transformation~\cite{Jonathan_1999}. A similar result holds for flexible catalysis: no state in the sequence $\{\vec{C}_i\}$ can be the uniform distribution vector.

\begin{corollary}
    The uniform distribution vector $\vec{u} = (\frac{1}{k}, \dots, \frac{1}{k})$ cannot belong to any set of flexible catalysts enabling a non-trivial transformation.
\end{corollary}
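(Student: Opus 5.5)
Suppose some $\vec{C}_j$ in the cycle equals the uniform vector $\vec{u}$. The plan is to show that this forces $\vec{x} \prec \vec{y}$, so the transformation is trivial. The key fact is that $\vec{u}$ is majorized by every $k$-dimensional probability vector, and that $\vec{x}\otimes\vec{u}$ behaves, for majorization purposes, like $\vec{x}$ itself.

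Concretely, I would use the relation entering $\vec{C}_j$ and the one leaving it. From
\begin{equation}
\vec{x}\otimes\vec{C}_{j}\prec\vec{y}\otimes\vec{C}_{j+1}
\end{equation}
with $\vec{C}_j=\vec{u}$, I obtain $\vec{x}\otimes\vec{u}\prec\vec{y}\otimes\vec{C}_{j+1}$. I then want to reach $\vec{y}\otimes\vec{u}$ on the right-hand side. Since $\vec{u}\prec\vec{C}_{j+1}$ points the wrong way for this, I would instead run the argument on the whole cycle.

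Take the chain starting at $\vec{C}_j=\vec{u}$. Using $\vec{u}\prec\vec{C}_{j-1}$ and preservation of majorization under tensoring with $\vec{x}$, I get
\begin{equation}
\vec{x}\otimes\vec{u}\prec\vec{x}\otimes\vec{C}_{j-1}\prec\vec{y}\otimes\vec{C}_{j}=\vec{y}\otimes\vec{u}.
\end{equation}
This is exactly the argument of Theorem~\ref{thm:low_dim_degeneracy}(i), with $\vec{u}$ as the minimal element, so $\vec{u}$ is itself a standard catalyst. I then reduce $\vec{x}\otimes\vec{u}\prec\vec{y}\otimes\vec{u}$ to $\vec{x}\prec\vec{y}$. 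The sorted vector $\vec{x}\otimes\vec{u}$ is each $x_i/k$ repeated $k$ times, so its partial sums are the piecewise-linear interpolation of the Lorenz curve of $\vec{x}$, evaluated at the points $l/k$. Checking the majorization inequalities at multiples of $k$ gives $\sum_{i\le m}x_i\le\sum_{i\le m}y_i$ for every $m$, that is, $\vec{x}\prec\vec{y}$. This recovers the Jonathan--Plenio fact that the uniform vector never catalyzes.

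The main point needing care is the case $n=1$, and the case where $j-1$ wraps around cyclically; both are handled by indexing modulo $n$. The preservation step $\vec{u}\prec\vec{C}_{j-1}\Rightarrow\vec{x}\otimes\vec{u}\prec\vec{x}\otimes\vec{C}_{j-1}$ follows because tensoring with $\vec{x}$ maps a doubly stochastic $D$ to the doubly stochastic $I\otimes D$. Since a non-trivial transformation means $\vec{x}\not\prec\vec{y}$, this is a contradiction.
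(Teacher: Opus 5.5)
Your proposal is correct and follows essentially the same argument as the paper: you place $\vec{u}$ as the target state $\vec{C}_j$ and use the chain $\vec{x}\otimes\vec{u}\prec\vec{x}\otimes\vec{C}_{j-1}\prec\vec{y}\otimes\vec{u}$ to make $\vec{u}$ a standard catalyst, which forces $\vec{x}\prec\vec{y}$. Beyond the paper, you justify the last step by checking partial sums at multiples of $k$, where the paper only asserts the equivalence; your abandoned first attempt via $\vec{C}_{j+1}$ can simply be dropped.
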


\begin{proof}
Assume there exists a set where $\vec{C}_{i+1} = \vec{u}$. Since $\vec{u}$ is the minimal element in the majorization order (it is majorized by any other vector in $\Delta_k$), we have $\vec{u} \prec \vec{C}_i$.
Using the property that $\vec{a} \prec \vec{b} \implies \vec{a} \otimes \vec{z} \prec \vec{b} \otimes \vec{z}$, we have:
$$ \vec{x} \otimes \vec{u} \prec \vec{x} \otimes \vec{C}_i. $$
Combining this with the flexible catalysis condition $\vec{x} \otimes \vec{C}_i \prec \vec{y} \otimes \vec{C}_{i+1}$ (where $\vec{C}_{i+1} = \vec{u}$):
$$ \vec{x} \otimes \vec{u} \prec \vec{x} \otimes \vec{C}_i \prec \vec{y} \otimes \vec{u}. $$
By transitivity, $\vec{x} \otimes \vec{u} \prec \vec{y} \otimes \vec{u}$, which is equivalent to $\vec{x} \prec \vec{y}$. This implies the transformation was possible without catalysis, contradicting the assumption that catalysis is necessary.
\end{proof}

Finally, we apply the necessary conditions from Theorem~\ref{lem:necessary_conditions} to the case of system dimension $d=3$. We find that for any incomparable states (i.e., neither $\vec{x} \succ \vec{y}$ nor $\vec{y} \succ \vec{x}$), flexible catalysis provides no advantage.

\begin{theorem} \label{thm:Dimension3}
Let $\vec{x}, \vec{y} \in \Delta_3^{\downarrow}$ be incomparable vectors. There exists no set of flexible catalysts enabling the transformation $\vec{x} \text{ to } \vec{y}$ or $\vec{y} \text{ to }  \vec{x}$.
\end{theorem}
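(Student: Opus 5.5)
Throughout, $\vec{x},\vec{y}\in\Delta_3^\downarrow$ are incomparable. The strategy is to show that incomparability in $d=3$ forces one of the two endpoint inequalities of Theorem~\ref{lem:necessary_conditions} to fail, in either direction of the transformation.

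\textbf{Step 1: characterize incomparability in $d=3$.} Majorization $\vec{x}\prec\vec{y}$ in $d=3$ reduces to two inequalities, $x_1\le y_1$ and $x_1+x_2\le y_1+y_2$. By normalization the second is equivalent to $x_3\ge y_3$. Incomparability of $\vec{x}$ and $\vec{y}$ therefore means that $\vec{x}\prec\vec{y}$ fails and $\vec{y}\prec\vec{x}$ fails. This holds iff $(x_1-y_1)$ and $(x_3-y_3)$ are both strictly positive or both strictly negative. The case analysis runs as follows.
\begin{itemize}
\item If $x_1=y_1$, then whichever of $x_3, y_3$ is larger decides the comparison, so the vectors are comparable.
\item If $x_3=y_3$, the vectors are comparable in the same way.
\item If $x_1<y_1$ and $x_3>y_3$, then $\vec{x}\prec\vec{y}$.
\item If $x_1>y_1$ and $x_3<y_3$, then $\vec{y}\prec\vec{x}$.
\end{itemize}
So, up to swapping the roles of $\vec{x}$ and $\vec{y}$, incomparability gives $x_1>y_1$ and $x_3>y_3$.

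\textbf{Step 2: show $x_1\le y_1$ and $x_3\ge y_3$ cannot both hold.} Take the case $x_1>y_1$ and $x_3>y_3$. A flexible transformation $\vec{x}\to\vec{y}$ would require $x_1\le y_1$ by Theorem~\ref{lem:necessary_conditions}, which is violated. A transformation $\vec{y}\to\vec{x}$ would require $y_3\ge x_3$, also violated. The opposite-sign case is symmetric. In both cases Theorem~\ref{lem:necessary_conditions} excludes flexible catalysis in either direction.

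\textbf{Step 3: handle zero Schmidt coefficients.} Theorem~\ref{lem:necessary_conditions} assumes $x_d,y_d>0$, so its hypotheses must be met. If $y_3=0<x_3$, the transformation $\vec{x}\to\vec{y}$ is still fine for this argument. The first half of Lemma~\ref{lem:boundary_conditions}, $x_1C_{i,1}\le y_1C_{i+1,1}$, needs no positivity of the smallest entries. Taking the product over the cycle, with $C_{i,1}>0$, still gives $x_1\le y_1$.

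For the reverse direction $\vec{y}\to\vec{x}$ with $y_3=0$, use Schmidt rank. The vector $\vec{y}\otimes\vec{C}_i$ has at most $2k$ nonzero entries, while $\vec{x}\otimes\vec{C}_{i+1}$ has $3k'$ nonzero entries. Here Theorem~\ref{thm:same_number_of_non-zero_elements_for_flexible_catalysis}-style reasoning is needed to show the catalysts share a common rank $k'$; alternatively one can simply compare the ranks around the cycle. Majorization cannot decrease the number of nonzero entries in the majorized vector, so this is impossible.

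The case $x_3=0$ is analogous: use the largest-component inequality or the rank argument, whichever direction is relevant.

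This boundary bookkeeping is the only delicate point. The main content is Step 1's sign analysis combined with Theorem~\ref{lem:necessary_conditions}.
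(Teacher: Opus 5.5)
Your proof is correct and follows the same route as the paper. Both arguments use the sign analysis showing that incomparability in $d=3$ forces $x_1-y_1$ and $x_3-y_3$ to be nonzero with the same sign (which the paper asserts without derivation), and then apply Theorem~\ref{lem:necessary_conditions} in both directions.

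Your Step~3 is a genuine addition. The paper invokes that theorem without checking its hypothesis $x_d,y_d>0$, yet incomparable pairs with a vanishing coefficient exist (e.g.\ $(0.6,0.2,0.2)$ and $(0.5,0.5,0)$). Your two tools close this case. The largest-component product argument needs no positivity. For the rank argument, a majorized vector has at least as many nonzero entries as the vector majorizing it, so the catalysts' Schmidt ranks $r_i$ would have to satisfy $2r_i\ge 3r_{i+1}$ around the cycle, which is impossible.
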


\begin{proof}
From Theorem~\ref{lem:necessary_conditions}, valid flexible catalysis requires the following inequalities:
\begin{align}
    \vec{x}\otimes\vec{c}_i \to \vec{y}\otimes\vec{c}_{i+1} \quad &\text{requires} \quad x_1 \leqslant y_1 \quad \text{and} \quad x_3 \geqslant y_3. \label{eq:req_forward} \\
    \vec{y}\otimes\vec{c'}_i \to \vec{x}\otimes\vec{c'}_{i+1} \quad &\text{requires} \quad y_1 \leqslant x_1 \quad \text{and} \quad y_3 \geqslant x_3. \label{eq:req_backward}
\end{align}
If $\vec{x}$ and $\vec{y}$ are incomparable in $d=3$, their components must satisfy one of two cases:

\textbf{Case 1:} $x_1 > y_1$ and $x_3 > y_3$.
\begin{itemize}
    \item If we have $x_1 > y_1$, this violates condition \eqref{eq:req_forward}.
    \item If we have $x_3 > y_3$ (or $y_3 < x_3$), this violates condition \eqref{eq:req_backward}.
\end{itemize}

\textbf{Case 2:} $x_1 < y_1$ and $x_3 < y_3$.
\begin{itemize}
    \item If we have $x_3 < y_3$, this violates condition \eqref{eq:req_forward}.
    \item If we have $x_1 < y_1$ (or $y_1 > x_1$), this violates condition \eqref{eq:req_backward}.
\end{itemize}
In all situations, at least one necessary inequality is violated. Thus, no flexible catalysis is possible in either direction.
\end{proof}
It is important to note that standard catalysis is a special case of flexible catalysis where all states in the sequence are identical. Theorem~\ref{thm:Dimension3} directly implies that standard catalysis is also incapable of enabling these transformations. This result is consistent with the findings in Ref.~\cite{Jonathan_1999}, which established that standard catalysis provides no advantage for incomparable states in dimension $d=3$.

\subsection{Bounds on flexible catalysts}\label{app:bounds}

Now, we consider the two inequalities concerning catalysts from Theorem 1 in~\cite{Grabowecky_2019}, which establishes a lower bound on the dimension of the standard catalyst. If these two inequalities also held for flexible catalysts, it would result in an identical lower bound on their dimension. While a lower bound only establishes a minimum threshold rather than determining the exact dimensional relationship, this would suggest that flexible catalysts implementing the same transformations cannot require a strictly lower dimension. However, we will demonstrate that while one of these inequalities holds for flexible catalysts, the other does not.

We now extend the necessary condition regarding the catalyst's ratio of largest to smallest components~\cite{Grabowecky_2019} to the flexible case.

\begin{definition}[Violation Indices]
For two probability vectors, $\vec{x}, \vec{y} \in \Delta_d^\downarrow$, we define the set of violation indices $\mathcal{L}$ as the indices where the partial sums of $\vec{x}$ exceed those of $\vec{y}$:
\begin{equation}
    \mathcal{L} \coloneqq \left\{ k \in \{1, \dots, d-1\} \;\middle|\; \sum_{i=1}^k x_i > \sum_{i=1}^k y_i \right\}.
\end{equation}
\begin{equation}
    m \coloneqq \min \mathcal{L} \quad \text{and} \quad n \coloneqq \max \mathcal{L}.
\end{equation}
\end{definition}

\begin{theorem}\label{thm:flexible_bounds}
For any incomparable vectors $\vec{x}, \vec{y} \in \Delta_d^\downarrow$ and flexible catalysts $\vec{C}_i \in \Delta_k^\downarrow$, if $\vec{x} \otimes \vec{C}_i \prec \vec{y} \otimes \vec{C}_{i+1}$, then:
\begin{equation}
      \frac{C_{i,1}}{C_{i,k}}  > \max_{l \in \mathcal{L}} \left( \frac{y_l}{y_{l+1}} \right),
\end{equation}
where $\mathcal{L}$ is the set of indices associated with the violation of standard majorization defined above.
\end{theorem}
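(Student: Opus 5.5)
The plan is to adapt the Grabowecky–Gour argument by contradiction. The one new point is bookkeeping: the catalyst whose ratio gets constrained in a single step is the \emph{output} catalyst $\vec{C}_{i+1}$. The statement's ratio is attached to $\vec{C}_i$, so I use the cyclic structure to reach it. Concretely, I prove that for every $j$ the relation $\vec{x}\otimes\vec{C}_{j-1}\prec\vec{y}\otimes\vec{C}_{j}$ forces $C_{j,1}/C_{j,k}>y_l/y_{l+1}$ for every $l\in\mathcal{L}$. Since the flexible condition holds for all indices with $\vec{C}_{n+1}=\vec{C}_1$, every $\vec{C}_i$ appears as some $\vec{C}_j$, which gives the claim for each $i$. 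If $C_{j,k}=0$, the ratio is $+\infty$ and there is nothing to prove; by Theorem~\ref{thm:same_number_of_non-zero_elements_for_flexible_catalysis} we may otherwise truncate to the common support. So I assume $C_{j,k}>0$. The set $\mathcal{L}$ is nonempty because $\vec{x}\not\prec\vec{y}$ (incomparability).

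Fix $l\in\mathcal{L}$ and suppose, for contradiction, that $C_{j,1}/C_{j,k}\le y_l/y_{l+1}$, i.e.\ $y_{l+1}C_{j,1}\le y_l C_{j,k}$. First I show that the $lk$ largest entries of $\vec{y}\otimes\vec{C}_j$ can be taken to be exactly $\{y_aC_{j,b}: a\le l,\ 1\le b\le k\}$. Any such entry satisfies $y_aC_{j,b}\ge y_lC_{j,k}$. Any entry with $a\ge l+1$ satisfies $y_aC_{j,b}\le y_{l+1}C_{j,1}$. The assumed inequality therefore separates the two groups, with ties harmless for partial sums. By normalization of $\vec{C}_j$, the sum of the $lk$ largest entries of $\vec{y}\otimes\vec{C}_j$ is
\begin{equation}
\sum_{a=1}^{l}\sum_{b=1}^{k} y_aC_{j,b}=\sum_{a=1}^{l} y_a .
\end{equation}

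On the other side, the $lk$ entries $\{x_aC_{j-1,b}: a\le l\}$ of $\vec{x}\otimes\vec{C}_{j-1}$ sum to $\sum_{a\le l}x_a$, again by normalization. Hence the sum of the $lk$ largest entries of $\vec{x}\otimes\vec{C}_{j-1}$ is at least $\sum_{a\le l}x_a$. Since $l\in\mathcal{L}$, this is strictly greater than $\sum_{a\le l}y_a$. That violates Nielsen's partial-sum inequality at index $lk$ for $\vec{x}\otimes\vec{C}_{j-1}\prec\vec{y}\otimes\vec{C}_j$, which is the desired contradiction. The bound holds for every $l\in\mathcal{L}$, so it holds for the maximum.

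The only delicate step is the index shift. The constraint naturally lands on the right-hand catalyst, not the left one, so the proof must invoke cyclicity rather than a single step. This also explains why the companion Grabowecky–Gour inequality, which couples the input and output catalysts differently, may fail in the flexible setting. Everything else is the standard block-separation argument and uses only normalization of each $\vec{C}_i$, which does not require equal catalysts.
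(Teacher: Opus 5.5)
Your proof is correct and follows the paper's argument. You assume the reverse ratio bound for the output catalyst and use it to separate the top $kl$ entries of $\vec{y}\otimes\vec{C}_{j}$, so that their sum is $\sum_{m\le l}y_m$; you then lower-bound the same partial sum of $\vec{x}\otimes\vec{C}_{j-1}$ by $\sum_{m\le l}x_m$, which contradicts $l\in\mathcal{L}$. The differences are minor: the paper obtains the lower bound via $\vec{u}\prec\vec{C}_i$ and tensor-product monotonicity, whereas you simply note that the top-$kl$ sum dominates the sum of any $kl$ entries, and your explicit use of cyclicity to move the bound from $\vec{C}_{i+1}$ to $\vec{C}_i$ is cleaner than the paper's ``assume for all $i$'' phrasing.
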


\begin{proof}
We use proof by contradiction. Assume that for all $i$, the ratio satisfies the reverse inequality:
\begin{equation}
    \frac{C_{i+1,1}}{C_{i+1,k}} \leqslant \frac{y_l}{y_{l+1}} \implies y_{l+1} C_{i+1,1} \leqslant y_l C_{i+1,k},
\end{equation}
for the specific $l$ that maximizes the ratio. This condition implies a specific ordering for the components of $\vec{y} \otimes \vec{C}_{i+1}$. Specifically, the first $kl$ largest elements of $(\vec{y} \otimes \vec{C}_{i+1})^\downarrow$ are simply the products of the first $l$ elements of $\vec{y}$ with the entire vector $\vec{C}_{i+1}$. Thus, the partial sum is:
\begin{equation}
\begin{split}
    \sum_{j=1}^{kl} (\vec{y} \otimes \vec{C}_{i+1})^\downarrow_j &= \left( \sum_{m=1}^l y_m \right) \left( \sum_{n=1}^k C_{i+1, n} \right) \\
    &= \sum_{m=1}^l y_m.
\end{split}
\end{equation}
Here we used $\sum C_{i+1, n} = 1$.

Now consider the LHS $\vec{x} \otimes \vec{C}_i$. Recall that the uniform distribution $\vec{u} = (1/k, \dots, 1/k)$ is the minimal element in the majorization order, which implies $\vec{u} \prec \vec{C}_i$ for any catalyst state. Since majorization is preserved under tensor products, we have $\vec{x} \otimes \vec{u} \prec \vec{x} \otimes \vec{C}_i$. By the definition of majorization, the partial sums of $\vec{x} \otimes \vec{C}_i$ must upper-bound those of $\vec{x} \otimes \vec{u}$:
\begin{equation}
    \sum_{j=1}^{kl} (\vec{x} \otimes \vec{C}_i)^\downarrow_j \geqslant \sum_{j=1}^{kl} (\vec{x} \otimes \vec{u})^\downarrow_j = \sum_{m=1}^l x_m.
\end{equation}

The majorization condition $\vec{x} \otimes \vec{C}_i \prec \vec{y} \otimes \vec{C}_{i+1}$ requires the partial sums of the LHS to be less than or equal to the RHS. Substituting our derived bounds:
\begin{equation}
\begin{split}
    \sum_{m=1}^l x_m &\leqslant \sum_{j=1}^{kl} (\vec{x} \otimes \vec{C}_i)^\downarrow_j \\
    &\leqslant \sum_{j=1}^{kl} (\vec{y} \otimes \vec{C}_{i+1})^\downarrow_j = \sum_{m=1}^l y_m.
\end{split}
\end{equation}
This implies $\sum_{m=1}^l x_m \leqslant \sum_{m=1}^l y_m$. However, since $\vec{x}$ and $\vec{y}$ are incomparable (specifically $\vec{x} \nprec \vec{y}$), there must exist an index $l \in \mathcal{L}$ such that $\sum_{m=1}^l x_m > \sum_{m=1}^l y_m$. This yields a direct contradiction. This concludes the proof of the theorem.
\end{proof}

Theorem~\ref{thm:flexible_bounds} constitutes a partial extension of the necessary conditions derived for standard catalysis in~\cite{Grabowecky_2019}. Because standard catalysis strictly requires bounds on both the ratio of the largest and smallest components and the ratio of adjacent components, one might naturally hypothesize that both bounds directly generalize to the flexible regime. However, the corresponding proposed restriction on the ratio of adjacent components is given by
\begin{equation}\label{eqn:proposed_ineq}
 \max_{v \in \{1, \dots, k-1\}} \left( \frac{C_{i,v}}{C_{i,v+1}} \right) < \min \left( \frac{y_1}{y_m}, \frac{y_{n+1}}{y_d} \right)
\end{equation}
does not hold in general in the case of flexible catalysis. To show this, we present an instance where \eqref{eqn:proposed_ineq} does not hold. Consider the following example in $d=4$ where:
\begin{align}
    \vec{x} &= (0.5064, 0.2565, 0.1401, 0.0970), \\
    \vec{y} &= (0.5474, 0.2048, 0.1903, 0.0575).
\end{align}
The violation set is $\mathcal{L} = \{2\}$, yielding indices $m=n=2$. Then,
\begin{equation}
    \min \left( \frac{y_1}{y_2}, \frac{y_3}{y_4} \right) = \min(2.673, 3.309) = 2.673.
\end{equation}
However, we identify a valid flexible cycle $\{\vec{C}_1, \vec{C}_2\}$ that enables the transformation:
\begin{align}
    \vec{C}_1 &= (0.6333, 0.2667, 0.1000), \\
    \vec{C}_2 &= (0.6167, 0.2833, 0.1000).
\end{align}
While the first catalyst state $\vec{C}_1$ safely satisfies the proposed adjacent-component bound (its maximum adjacent ratio is $C_{1,2}/C_{1,3} = 2.667 < 2.673$), the state $\vec{C}_2$ exhibits a maximum ratio of adjacent components—specifically between its second and third components—that exceeds the limit: \begin{equation} \frac{C_{2,2}}{C_{2,3}} = \frac{0.2833}{0.1000} = 2.833. \end{equation} Since $2.833 >2.673$, we conclude \eqref{eqn:proposed_ineq} does not hold in general. In standard catalysis, a catalyst must strictly obey this adjacent-component bound to successfully facilitate the transformation. The fact that a flexible cycle can bypass this condition implies a relaxation of structural constraints. Specifically, flexible catalysts can possess steeper drops in their probability distributions (i.e., strictly larger ratios between neighboring components) than what is permissible for a standard catalyst achieving the identical state transformation.
\end{document}